\newcommand{\subparagraph}{}
\newtheorem{lemma}{Lemma}
\newtheorem{proposition}{Proposition}
\titlespacing{\section}{0pt}{0pt}{0pt}
\titlespacing{\subsection}{0pt}{0pt}{0pt}
\titlespacing{\subsubsection}{0pt}{0pt}{0pt}
\newcommand{\bottomcaption}{%
\setlength{\abovecaptionskip}{0pt}%
\setlength{\belowcaptionskip}{0pt}%
\caption}
\newcommand{\topcaption}{%
\setlength{\abovecaptionskip}{0pt}%
\setlength{\belowcaptionskip}{0pt}%
\caption}
\begin{document}

\title{Understanding the Performance of Learning Precoding Policy with GNN and CNNs}

\author{Baichuan Zhao, Jia Guo, and Chenyang Yang
        % <-this % stops a space
\thanks{A part of this work has been published in IEEE WCNC 2022\cite{zhao2022learning}. }% <-this % stops a space
}

% The paper headers
%\markboth{Journal of \LaTeX\ Class Files,~Vol.~14, No.~8, August~2021}%
%{Shell \MakeLowercase{\textit{et al.}}: A Sample Article Using IEEEtran.cls for IEEE Journals}

%\IEEEpubid{0000--0000/00\$00.00~\copyright~2021 IEEE}

\maketitle

\begin{abstract}
Learning-based precoding has been shown able to be implemented in real-time, jointly optimized with channel acquisition, and robust to imperfect channels. Nonetheless, existing works rarely explain when and why a deep neural network (DNN) for learning precoding policy can perform well. In this paper, we strive to understand the learning performance by taking baseband precoding as an example, for which the optimal precoding matrices of several problems such as sum rate maximization have mathematical structure. Toward this goal, we design a graph neural network (GNN) with edge-update mechanism, whose inductive bias matches to the precoding policy, and analyze its connection to the commonly used convolutional neural networks (CNNs).  Noticing that the learning performance can be decomposed into approximation and estimation errors, which depend on the smoothness of a policy and the inductive bias of a DNN, we analyze in which system settings the precoding policy is harder to be approximated by a DNN and how the inductive biases introduced by parameter sharing affect estimation errors. We proceed to derive the estimation error bounds of the DNNs. Simulations validate our analyses and verify the gain of GNN over CNNs in terms of reducing sample complexity. \end{abstract}

\begin{IEEEkeywords}
Precoding, learning performance, inductive bias, graph neural networks, convolutional neural networks
\end{IEEEkeywords}

\vspace{2mm}\section{Introduction}

Precoding is an important technique for supporting high capacity of multi-input multi-output (MIMO) systems \cite{boccardi2014mimo}. Traditional methods for optimizing precoding resort to generic optimization tools or customized numerical algorithms such as the successive convex approximation (SCA) algorithm \cite{tran2012fast}
 and the weighted minimum mean square error (WMMSE) algorithm \cite{shi2011iteratively}. Although superior performance can be achieved, these algorithms are hard to be implemented in practice since they are unable to meet the stringent latency requirement in the fifth generation and beyond systems due to the high computational complexity \cite{zhang2019deep}.

To reduce inference complexity, deep neural networks (DNNs) have been designed to optimize precoding in a variety of multi-antenna systems \cite{hu2020iterative, kim2021learning, elbir2021family,attiah2022deep, zhang2021model,yuan2020transfer,kim2020deep, shi2021deep, chen2020sub, bo2021deep, lei2021ci, jiang2021learning,  kang2022mixed, he2020model,xia2020deep,kim2022bipartite}. Moreover, it has been shown that the learned precoding policies are robust to imperfect channels \cite{hu2020iterative,kim2021learning,elbir2021family,attiah2022deep,zhang2021model}, and the precoding policies can be optimized in an end-to-end manner without explicit channel acquisition \cite{attiah2022deep, zhang2021model, jiang2021learning, kang2022mixed}. Nonetheless, the training complexity of the DNNs for learning
precoding is high when the numbers of antennas and users are large \cite{yuan2020transfer}, which impedes their practical usage in dynamic wireless systems.

Existing learning-based methods for optimizing precoding can be classified into two categories. The first category is purely data-driven, where a DNN is designed for learning a precoding policy (i.e., a mapping from channel matrix to precoding matrix) \cite{bo2021deep, chen2020sub, attiah2022deep, elbir2021family}. These methods can be used for optimizing precoding with diverse objectives and constraints simply by selecting different loss functions for training, while the downsides are the high training complexity and weak interpretability. In the second category, mathematical models are leveraged for designing model-driven DNNs. For example, the structure of the optimal precoding matrices of several precoding problems was harnessed in \cite{kim2020deep,shi2021deep,zhang2021model,yuan2020transfer,xia2020deep,kim2022bipartite}, where DNNs were designed for learning power allocation policies, from which the precoding matrix can be recovered. Another example is deep unfolding, which was proposed to optimize precoding in \cite{hu2020iterative,kang2022mixed,he2020model}. These methods are with low training complexity due to the simplified functions for learning and are partially interpretable due to the structure of
the optimal solution or iterative algorithms. However, these model-driven DNNs are only applicable to specific problems or algorithms. For instance, only for several baseband precoding problems (say sum-rate maximization, signal-to-interference-plus-noise ratio
balancing, and minimum-rate maximization), the optimal solutions are with known structures. Yet for many other precoding problems such as energy efficient maximization precoding, hybrid analog and baseband precoding, and coordinated beamforming, the optimal solutions are without explicit structures.

To reduce the training complexity of DNNs, another approach alternative to model-driven is to introduce inductive biases, which can be a set of assumptions on which functions
are learnable by a DNN (i.e., the hypothesis space of the DNN). By imposing constraints on the function family that a DNN can represent in the form of architecture assumption \cite{battaglia2018relational}, inductive bias determines the structure of weight matrix of the DNN. For a structural DNN with appropriate
inductive bias, its learning performance with a given training set is better and its training complexity to achieve an expected performance is lower than fully connected neural networks (FNNs) \cite{baxter2000model}.

An example of structural DNN is convolutional neural network (CNN), which was originally proposed for learning image-related tasks efficiently by capturing the shift invariance and locality properties of images \cite{lecun1989backprop}. Encouraged by the great success of CNNs in many fields, most existing studies employed CNNs for learning precoding policies \cite{bo2021deep,kim2020deep,shi2021deep,chen2020sub,zhang2021model,elbir2021family}
or used CNN-based solution as a baseline \cite{hu2020iterative}, with the intuition that the channel matrix can
be regarded as an image. The CNNs used in these works are composed of linear convolutional layers cascaded with fully connected layers (referred to as F-LCNN in the sequel), which are the same as the first CNN proposed in \cite{lecun1989backprop} and are not proved to be shift invariant. The motivation of these works to use CNNs is to decrease trainable parameters \cite{shi2021deep,elbir2021family} or simply not mentioned \cite{bo2021deep, kim2020deep, chen2020sub, hu2020iterative, zhang2021model}, while only marginal performance gain was observed over FNN \cite{kim2020deep}.

Another example is graph neural network (GNN), which can leverage  the permutation equivariance (PE) property of a policy and the topology information of a graph. PE properties are widely existed in wireless policies \cite{sun2022improving,liu2023multidimensional}. By judiciously designing a GNN such that its learnable functions satisfy the same PE property of a policy, the GNN can learn the policy efficiently  \cite{eisen2020optimal,guo2021learning,liu2023multidimensional}. In \cite{jiang2021learning}, a vertex-GNN, which updates hidden representations of vertices, was designed to learn a precoding policy from the received pilots in a reflecting intelligent surface aided system. The vertex-GNN was shown generalized well to unseen number of users at low signal-to-noise ratio (SNR). However, millions of samples were used  in \cite{jiang2021learning} for training the GNN, since the equivariance of the policy to the permutation of antennas was overlooked.
%In the conference version of this paper \cite{zhao2022learning}, an edge-GNN was designed to learn precoding policy, which updates hidden representations of edges.
In \cite{liu2023multidimensional}, multidimensional GNNs (MDGNNs) were proposed as a unified framework to learn a multitude of permutable wireless policies, which can exploit all possible permutations.

While the inductive biases have been introduced
in the aforementioned CNNs and GNNs, when and why the
DNNs for learning precoding can perform well remain elusive.

\vspace{2mm}\subsection{Related Works}
The learning performance  of DNNs can be decomposed into  estimation error and
approximation error \cite{ shalev2014understanding}.

The \textit{estimation error} is studied by statistical learning theory \cite{anthony1999neural} and computational learning theory \cite{kearns1994introduction}, which
%analyze how well a trained DNN can perform on test samples under the assumption that the training and test samples are independently drawn from the same distribution. Statistical learning theory and computational learning theory
respectively concern about the estimation error and the sample complexity that can be derived from each other.
In the literature of statistical learning theory \cite{li2018tighter}, there are two major research directions. The first direction is to establish tighter error bounds {for a given DNN architecture}, where the tightness is examined by comparing with existing results in the form of big-$\mathcal{O}$ \cite{garg2020generalization, li2018tighter, lin2019generalization, arora2018stronger}, or by analyzing the gap between the upper and lower bounds \cite{bartlett2019nearly}. The second direction is to develop error bounds {for DNNs with different architectures}, which are used to show the superiority of one DNN architecture and provide guidance for selecting DNNs. Toward this goal, these works analyze the relationship of the estimation error bounds with the hyper-parameters of DNNs, and compare the error bounds of DNNs with different architectures without providing tightness guarantees \cite{bousquet2003new}.

The \textit{approximation error} is studied by approximation theory \cite{devore2021neural}, which analyzes how well a DNN with specific architecture and scale (e.g., the depth and the width of a FNN) can approximate a target function under some assumptions on the {smoothness} of the function. In the literature of approximation theory\cite{yarotsky2017error, shen2020deep}, different metrics were used for measuring the smoothness of a function to be learned. For example, an approximation error bound of a ReLU neural network for learning Lipschitz continuous function was derived in \cite{yarotsky2017error}, which depends on a Lipschitz constant used to measure the smoothness of a target function in Lipschitz space. However, existing papers never consider which target functions satisfy the smoothness assumptions and when the target functions are smoother under these metrics.

In the field of wireless communications, only a few works have analyzed the estimation errors of DNNs by using the statistical learning theory \cite{hu2020deep, shrestha2023optimal} or computational learning theory\cite{shen2022graph}.
In \cite{hu2020deep}, an estimation error bound of a FNN used for channel acquisition was derived for showing that the bound converges to zero with infinite training samples. In \cite{shrestha2023optimal}, a special vertex-GNN was used to classify irrelevant nodes on the search tree of a joint beamforming and antenna selection algorithm, where an estimation error bound was derived for further obtaining the classification accuracy. In \cite{shen2022graph}, asymptotical estimation error bounds of FNN and a vertex-GNN were analyzed under the probably approximately correct learning framework. It was shown that the estimation error bound and the sample complexity of the GNN are respectively reduced to $\mathcal{O}(1/n)$ and $\mathcal{O}(1/n^2)$ of FNN where $n$ is the number of vertices in the graph. However, the bound of GNN is only applicable to the vertex-GNN that learns over homogeneous graph.

Although these works theoretically analyzed the learning performance of wireless policies in terms of the estimation error bounds, they did not consider the structures of the weight matrices and the resulting inductive biases of the DNNs. Therefore, they cannot explain why a DNN is more efficient for learning a policy than  another. Besides, CNNs that have been widely employed for learning wireless policies were not considered. Moreover, these works did not analyze the approximation error.  Specifically, in \cite{shrestha2023optimal}, the approximation error was simply assumed to be bounded by an unknown constant, while in \cite{hu2020deep} and \cite{shen2022graph}, it was only mentioned that FNNs with sufficient width and depth can well approximate arbitrary functions. As a consequence, these works cannot explain why a learned policy does not perform well in some settings. So far, no existing works of intelligent wireless communications have ever analyzed how the smoothness of a policy affects the learning performance.

\vspace{2mm}\subsection{Motivations and Contributions}\label{motivation-contribution}

It was mentioned in \cite{yuan2020transfer} that precoding policy is hard to learn since the input and output of a DNN for learning precoding are with high dimension due to the large numbers of antennas and users. Maybe inspired by such an observation, most existing works for optimizing precoding resorted to model-driven DNNs. For instance, only key variables such as allocated powers were learned in \cite{kim2020deep,shi2021deep,zhang2021model,yuan2020transfer,xia2020deep,kim2022bipartite} by leveraging the structure of the optimal precoding matrix. However, is it really the high dimension that incurs the hardness of learning the precoding policy?

% However, model-driven DNNs rely on specific domain knowledge on the problem and thus only have limited applications. On the contrary, data-driven DNNs are flexible, whose architectures are applicable to a multitude of precoding problems after proper adjustment for different system configurations.

To provide useful insight into the design of DNNs for optimizing  precoding, this paper addresses the following questions: (Q1) what is the impact of the inductive biases of DNNs on the estimation error, and (Q2) in which scenarios a precoding policy is harder to learn, i.e., the DNN for learning the policy needs more trainable parameters to reduce the approximation error?

To answer the first question, we analyze the PE property of the precoding policy, design an edge-GNN with the same PE property as the policy, which updates hidden representations of edges, and show the connection between the edge-GNN and two CNNs. One is circular CNN (CCNN) consisting of only circular convolutional layers, which was recently introduced to learn from panoramic images \cite{schubert2019circular}. The other is linear CNN (LCNN) consisting of only linear convolutional layers, which is the building block of the commonly-used F-LCNN. To answer the second question, we analyze how the SNR, channel correlation, and the numbers of antennas and users affect the smoothness of a precoding policy. Considering that precoding consists of power allocation and beamforming, we also analyze the smoothness of power allocation and beamforming policies.
%In this way, we can understand why instead of first learning a power allocation policy then using a recover module to compute precoding matrix.

We design an edge-GNN rather than adopting a vertex-GNN for the following reasons. (i) The edge-GNN has simpler architecture, since both features and actions are defined on edges in the precoding graph. If a vertex-GNN is designed, an extra readout layer is needed to transform the hidden representations of vertices to the actions of edges. (ii) The update equation of the edge-GNN can be expressed in matrix form by selecting simple aggregation and combination functions, from which we can compare the structures of weight matrices and then the inductive biases among GNN, CNNs and FNN. If a vertex-GNN is considered, complicated non-linear processing functions have to be used to avoid the performance degradation caused by the non-distinguishable channel coefficients after aggregation \cite{liu2023multidimensional}. As a consequence, the update equation of the vertex-GNN can not be expressed in matrix form. (iii) It is worth mentioning that an edge-GNN can be transformed into a vertex-GNN by using the dual hyper-graph transformation proposed in \cite{jo2021edge}. Specifically, the edges and vertices of a graph can be first transformed into the vertices and hyper-edges of a hyper-graph, and then a vertex-GNN can be used to update the hidden representations of the vertices over the hyper-graph. We do not consider this method because it induces the same update equation as the designed edge-GNN but requires an extra transformation process that is not intuitive. %because the focus of this paper is to analyze how the inductive biases imposed on the GNN can improve the learning performance if the GNN is with the same PE property of the policy. When using a vertex-GNN with the transformation to update the hidden representations of edges for the original precoding graph, our analyses are still applicable.

For easy elaboration, we consider the basedband precoding in multi-user multi-input single-output (MISO) system, but the analyses can be extended to other systems.

The main contributions are summarized as follows:
\begin{itemize}
\item To analyze why GNN is more efficient for learning precoding policy than CNNs and FNN, we first design an edge-GNN with matched PE property to the policy (namely EdgeGNN)  to learn the precoding policy over a heterogeneous graph. Then, we show the connection among the structures of weight matrices and the resulting inductive biases of the EdgeGNN with two CNNs, and derive their estimation error bounds by resorting to statistical learning theory.
\item To analyze when a precoding policy is harder to learn for DNNs, we analyze the impact of the system and environment parameters on the smoothness of the policy, and identify the key factors that affect the approximation error of a DNN for learning precoding policy.
\item Extensive simulations are provided to validate our analyses, and quantify the impact of the inductive biases and the key factors on the performance.
\end{itemize}

To the best of our knowledge, the conference version of this paper \cite{zhao2022learning} is the first work to design an edge-GNN for learning wireless policies. In \cite{zhao2022learning}, the two-dimensional-PE (2D-PE) property of the precoding policy was analyzed, and the edge-GNN was introduced, whose inductive bias was compared with CNNs. In this journal version, we provide the motivation and the design details of the EdgeGNN, analyze how the inductive biases affect the learning performance theoretically, and identify in which scenarios the policy is harder to learn.

Different from other works employing GNNs for learning precoding \cite{liu2023multidimensional, kim2022bipartite}, the goal of this paper is to analyze under which system settings and with what architectures the DNNs can
perform well for learning a precoding policy,
which were missing in  \cite{kim2022bipartite} and \cite{liu2023multidimensional}. Besides, in contrast to the data-driven EdgeGNN that learns the precoding policy, the bipartite GNN in \cite{kim2022bipartite} is a model-driven vertex-GNN that learns the power allocation policy for recovering the precoding matrix. Different from the MDGNNs proposed in \cite{liu2023multidimensional} that satisfy the PE property of wireless policies by introducing parameter sharing into FNNs, the EdgeGNN satisfies the PE property by updating the hidden representations over a graph, which is not a special case of the MDGNNs.

The rest of the paper is organized as follows. Sections II and III introduce the EdgeGNN and two CNNs to learn the precoding policy, respectively. In Section IV, we analyze how the system and environment parameters and the inductive biases of DNNs affect the learning performance of precoding. Simulation results and conclusions are given in Section V and Section VI, respectively.

Notations: $\mathbf{0}$ and $\mathbf{1}$ denote all-zero and all-one matrices, respectively. $\mathbf{I}$ denotes identity matrix. $\mathrm{vec}(\mathbf{A})$ denotes the operation of vectorizing $\mathbf{A}$ into a vector.
$(\cdot)^\mathsf{T}$ and $(\cdot)^\mathsf{H}$ denote the transpose and the conjugate transpose, respectively.
%$\mathfrak{R}(\cdot)$ and $\mathfrak{I}(\cdot)$ denote
%the real and imaginary parts of complex numbers, respectively.
$|\cdot|$ denotes magnitude. $||\cdot||$ denotes
the Frobenius norm of matrix or two norm of vector. $||\cdot||_\sigma$ denotes the spectral norm of matrix. $\otimes$, $\circledast$ and $\odot$ denote Kronecker product, circular convolution and Hadamard product, respectively.

\vspace{2mm}\section{EdgeGNN for Learning Precoding Policy}

In this section, we first briefly introduce a precoding policy and its 2D-PE
property. Then, we construct a graph, and design the EdgeGNN for learning the policy over the graph.

\vspace{2mm}\subsection{Precoding Policy  and PE Property}

Consider a downlink system, where a base station (BS) equipped with $N$ antennas transmits to $K$ users each with a single antenna. Denote the channel vector of the $k$th user as $\mathbf{h}_k=[h_{k,1}\cdots,h_{k,N}]^\mathsf{T}\in\mathbb{C}^{N\times1}$, where $h_{k,n}$ is the channel coefficient between the $k$th user and the $n$th antenna. After obtaining the channel matrix $\mathbf{H}=[\mathbf{h}_1,\cdots,\mathbf{h}_K]^\mathsf{H}\in\mathbb{C}^{K{\times}N}$, a precoding matrix $\mathbf{V}=[\mathbf{v}_1,\cdots,\mathbf{v}_K]^\mathsf{H}\in\mathbb{C}^{K{\times}N}$ is optimized at the BS to maximize a utility function $U(\mathbf{H},\mathbf{V})$ under a set of constraints $\mathbf{c}(\mathbf{H},\mathbf{V})$, i.e.,
\begin{equation}
\label{eq_opt}
\max_{\mathbf{V}}~~U(\mathbf{H},\mathbf{V})  ~~
s.t. ~~ \mathbf{c}(\mathbf{H},\mathbf{V}){\leq}0,
\end{equation}
where $\mathbf{v}_k=[v_{k,1},\cdots,v_{k,N}]^\mathsf{H}\in\mathbb{C}^{K{\times}N}$ is the precoding vector to the $k$th user.

Denote the precoding policy as $\mathbf{V}^\star=f^\star(\mathbf{H})$, where $\mathbf{V}^\star$ is the optimized solution of the problem in \eqref{eq_opt} with given $\mathbf{H}$. Assume that the utility
and constraint functions do not change with the orders of users and antennas, which holds
for commonly used utility functions (e.g., spectral and energy efficiencies) and constraints (e.g., power and data rate constraints) \cite{eisen2020optimal}. Then, the precoding policy exhibits the 2D-PE property: $\boldsymbol{\Pi}_1^\mathsf{T}\mathbf{V}^\star\boldsymbol{\Pi}_2=f^\star(\boldsymbol{\Pi}_1^\mathsf{T}\mathbf{H}\boldsymbol{\Pi}_2)$, where $\boldsymbol{\Pi}_1$ and $\boldsymbol{\Pi}_2$ are arbitrary permutation matrices \cite{zhao2022learning}.

\vspace{2mm}\subsection{Graph for Precoding}
Denote a graph as $\mathcal{G}=(\mathcal{V},\mathcal{E})$, where $\mathcal{V}$ is a set of vertices and $\mathcal{E}$ is a set of edges. Each vertex and each edge may be associated with a \textit{feature} or an \textit{action}.
%Each vertex has a type $s\in\mathcal{S}$. The type of an edge is determined by the type of vertices connected to it. Specifically, edges $(v_1,v_2)$ and $(v_3,v_4)$ have the same type if and only if vertices $v_1,v_3$ have the same type and vertices $v_2,v_4$ have the same type.

Constructing appropriate graphs is the premise of applying GNNs, while the graph is not unique to optimize a policy. To harness the 2D-PE property, a \textit{precoding graph} for learning the precoding policy is constructed as follows.

The precoding graph is a heterogeneous graph with two types of vertices, as illustrated in Fig.~\ref{fig_relation}. Each antenna is a vertex and each user is a vertex of another type. The link between them is an edge and all edges belong to the same type.
There is no feature or action on vertices. The feature of the edge between the $k$th user vertex and
the $n$th antenna vertex (denoted as edge $(k,n)$) is  $h_{k,n}$, and
the features of all the edges can be expressed as $\mathbf{H}$. The action of edge $(k,n)$ is the $n$th element of the precoding vector for the $k$th user (i.e., $v_{k,n}$), and all the actions can be expressed as $\mathbf{V}$.

\begin{figure}[t]
\centerline{\includegraphics[width=0.7\linewidth]{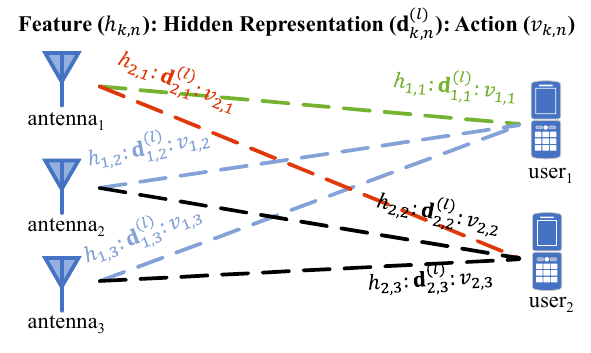}}\vspace{-2mm}
\bottomcaption{Precoding graph and hidden representation, $N=3$ and $K=2$.}
\label{fig_relation}
\end{figure}

\vspace{2mm}\subsection{An Edge-GNN Learning Over the Precoding Graph}

To learn the mapping from the features to the actions that are all defined on edges, a natural
choice is to update the hidden representations of edges in each layer, where the information from the neighboring edges of each edge is first aggregated at its neighboring vertices and then combined. Then, the representations of the edges in the last layer are the actions.
In the precoding graph, the neighboring vertices of edge $(k,n)$ are the $k$th user vertex and the $n$th antenna vertex, while the neighboring edges of edge $(k,n)$ are the edges connected with the $k$th user vertex or the $n$th antenna vertex. Further considering that all the vertices are without features, the hidden representation of edge $(k,n)$ in the $l$th layer of the designed edge-GNN, i.e., the EdgeGNN, is updated as follows.
\begin{itemize}
\item[(1)]\textbf{Aggregation:} The aggregated outputs of the $k$th user vertex and the $n$th antenna vertex are respectively
\begin{IEEEeqnarray*}{c}\IEEEyesnumber\label{eq_gnn_aggr}
\mathbf{u}_k^{(l)}=\mathrm{PL}^{(l)}_1\left(\left\{q_1^{(l)}(\mathbf{d}^{(l-1)}_{k,n}):n=1,\cdots,N\right\}\right),\IEEEyessubnumber\\
\mathbf{a}_n^{(l)}=\mathrm{PL}^{(l)}_2\left(\left\{q_2^{(l)}(\mathbf{d}^{(l-1)}_{k,n}):k=1,\cdots,K\right\}\right),\IEEEyessubnumber
\end{IEEEeqnarray*}
\noindent where $q_1^{(l)}(\cdot),q_2^{(l)}(\cdot)$, $\mathrm{PL}_1^{(l)}(\cdot),\mathrm{PL}_2^{(l)}(\cdot)$ are the processing and pooling functions in the $l$th layer for user vertices and antenna vertices, respectively. $\mathrm{PL}_1^{(l)}(\cdot)$ and $\mathrm{PL}_2^{(l)}(\cdot)$ are invariant to the permutation of their input variables.
\item[(2)]\textbf{Combination:} The hidden representation of edge $(k,n)$ in the $l$th layer can be obtained as,
\begin{equation}\label{eq_gnn_comb}
\mathbf{d}_{k,n}^{(l)}=\mathrm{CB}^{(l)}\left(\mathbf{d}_{k,n}^{(l-1)},\mathbf{u}_k^{(l)},\mathbf{a}_l^{(l)}\right),
\end{equation}
\noindent where $\mathbf{d}^{(l)}_{k,n}\in\mathbb{R}^{C_l{\times}1}$, $C_l$ is the number of hidden features of the edge, and $\mathrm{CB}^{(l)}(\cdot)$ is the combination function.
\end{itemize}

%After the updates with $L$ layers, the feature of edge $(k,n)$ $\mathbf{d}^{(0)}_{k,n}\in\mathbb{R}^{C_0{\times}1}$ can be updated to $\mathbf{d}^{(L)}_{k,n}\in\mathbb{R}^{C_L{\times}1}$ over the precoding graph, where $C_0=C_L=2$, and $\mathbf{d}^{(0)}_{k,n}$ is the vector composing of the real and imaginary parts of $h_{k,n}$.
The hidden representations of all edges in the $l$th layer form a tensor $\mathbf{D}^{(l)}=[\mathbf{d}^{(l)}_{k,n}]\in\mathbb{R}^{K\times N \times C_l}$. The input of the EdgeGNN is $\mathbf{D}^{(0)}\in\mathbb{R}^{K\times N \times 2}$, which is composed of the real and imaginary parts of $\mathbf{H}$. Such an EdgeGNN can be used for learning precoding policy with different utility functions
and constraints by simply training with different loss functions.

\vspace{2mm}\subsection{Structure of Weight Matrix of the EdgeGNN}
A variety of processing, pooling, and combination functions can be chosen. To express the update equation of the EdgeGNN in matrix form for the sake of connecting to other DNNs later, we employ linear processing, $\mathrm{sum}(\cdot)$ pooling, and a simple combination function. Then, the hidden representations of edge $(k,n)$ in the $l$th layer are updated as \vspace{-2mm}
\begin{IEEEeqnarray*}{lc}\IEEEyesnumber
\textbf{Aggregation: } & \IEEEnonumber\\
\IEEEeqnarraymulticol{2}{c}{\mathbf{u}_k^{(l)}=\sum_{n=1}^N \mathbf{P}_\mathrm{G} \mathbf{d}_{k,n}^{(l-1)},~~\mathbf{a}_n^{(l)}=\sum_{k=1}^K \mathbf{Q}_\mathrm{G} \mathbf{d}_{k,n}^{(l-1)},\IEEEyessubnumber\label{eq_aggregator}}\\
\textbf{Combination: } & \IEEEnonumber\\ \mathbf{d}^{(l)}_{k,n}=\phi\left(\mathbf{O}_\mathrm{G}\mathbf{d}^{(l-1)}_{k,n}+\mathbf{u}^{(l)}_k+\mathbf{a}_n^{(l)}\right),\IEEEyessubnumber\label{eq_combiner}
\end{IEEEeqnarray*}
\noindent where $\phi(\cdot)$ is an activation function, and $\mathbf{O}_\mathrm{G},\mathbf{P}_\mathrm{G},\mathbf{Q}_\mathrm{G}\in\mathbb{R}^{C_l{\times}C_{l-1}}$ are the trainable weight matrices. Here and in the sequel, we omit the superscript $(l)$ of the activation function and the
weight matrices for notational simplicity.

By vectorizing the tensor $\mathbf{D}^{(l)}$ as $\mathrm{vec}(\mathbf{D}^{(l)})\triangleq \left[\mathbf{d}^{(l)\mathsf{T}}_{1,1},\cdots,\mathbf{d}^{(l)\mathsf{T}}_{K,1},\cdots,\mathbf{d}^{(l)\mathsf{T}}_{1,N},\cdots,\mathbf{d}^{(l)\mathsf{T}}_{K,N}\right]^\mathsf{T}\in\mathbb{R}^{KNC_l\times1}$ and substituting \eqref{eq_aggregator} into \eqref{eq_combiner}, the update equation for the hidden representations of all edges in the $l$th layer can be expressed in matrix form as
\begin{equation}\label{eq_forward}
\mathrm{vec}(\mathbf{D}^{(l)})=\phi\left(\mathbf{W}_\mathrm{G}\mathrm{vec}(\mathbf{D}^{(l-1)})\right),
\end{equation}
\noindent where the trainable weight matrix $\mathbf{W}_\mathrm{G}$ is composed of two sub-matrices $\mathbf{B}_\mathrm{G}$ and $\mathbf{C}_\mathrm{G}$ that are further composed of $\mathbf{P}_\mathrm{G}$, $\mathbf{Q}_\mathrm{G}$ and $\mathbf{O}_\mathrm{G}^\prime=\mathbf{O}_\mathrm{G}+\mathbf{P}_\mathrm{G}+\mathbf{Q}_\mathrm{G}$ with the following structures
\setlength{\arraycolsep}{2pt}
\begin{IEEEeqnarray*}{c}
\mathbf{W}_\mathrm{G}
=\left[\begin{array}{cccc}
\mathbf{B}_\mathrm{G} & \mathbf{C}_\mathrm{G} & \cdots & \mathbf{C}_\mathrm{G} \\
\mathbf{C}_\mathrm{G} & \mathbf{B}_\mathrm{G} & \cdots & \mathbf{C}_\mathrm{G} \\
\vdots & \vdots & \ddots & \vdots \\
\mathbf{C}_\mathrm{G} & \mathbf{C}_\mathrm{G} & \cdots & \mathbf{B}_\mathrm{G} \\
\end{array}\right],\\
\mathbf{B}_\mathrm{G}=\left[\begin{array}{cccc}
\mathbf{O}_\mathrm{G}^\prime & \mathbf{Q}_\mathrm{G} & \cdots & \mathbf{Q}_\mathrm{G} \\
\mathbf{Q}_\mathrm{G} & \mathbf{O}_\mathrm{G}^\prime & \cdots & \mathbf{Q}_\mathrm{G} \\
\vdots & \vdots & \ddots & \vdots \\
\mathbf{Q}_\mathrm{G} & \mathbf{Q}_\mathrm{G} & \cdots & \mathbf{O}_\mathrm{G}^\prime \\
\end{array}\right],
\mathbf{C}_\mathrm{G}=\left[\begin{array}{cccc}
\mathbf{P}_\mathrm{G} & \mathbf{0} & \cdots & \mathbf{0} \\
\mathbf{0} & \mathbf{P}_\mathrm{G} & \cdots & \mathbf{0} \\
\vdots & \vdots & \ddots & \vdots \\
\mathbf{0} & \mathbf{0} & \cdots & \mathbf{P}_\mathrm{G} \\
\end{array}\right].\IEEEyesnumber\label{eq_structured_parameter}
\end{IEEEeqnarray*}
\setlength{\arraycolsep}{5pt}

To understand how the weight matrix is obtained by updating the hidden representations over the graph, we take edge $(1,1)$ in Fig.~\ref{fig_relation} as an example. Its hidden representation (i.e., $\mathbf{d}_{1,1}^{(l)}$ in green color) is obtained from the previous hidden representations of: (i) the same edge (i.e., $\mathbf{d}_{1,1}^{(l-1)}$ in green), (ii) the edges linked to the first user vertex (i.e., $\mathbf{d}_{1,n}^{(l-1)},{\forall}n{\neq}1$ in blue), (iii) the edges connected to the first antenna vertex (i.e., $\mathbf{d}_{k,1}^{(l-1)},{\forall}k{\neq}1$  in red), and (iv) other edges that are not neighboured to edge $(1,1)$ (i.e., $\mathbf{d}_{k,n}^{(l-1)},{\forall}k\neq1,n\neq1$ in gray). These four terms are respectively multiplied by the weight matrices $\mathbf{O}_\mathrm{G}^\prime$, $\mathbf{Q}_\mathrm{G}$, $\mathbf{P}_\mathrm{G}$ and $\mathbf{0}$.

\vspace{2mm}\section{CNNs for Learning Precoding Policy}
In this section, we first introduce two CNNs for learning the precoding policy, which have never been used for learning wireless policies. To show the connection of the two CNNs to the EdgeGNN, we then express the input-output relations of the two CNNs in matrix form.

\vspace{2mm}\subsection{Structure of Weight Matrices of Two CNNs}

F-LCNN has been used for learning precoding policies
\cite{bo2021deep, kim2020deep, shi2021deep, chen2020sub, zhang2021model, elbir2021family}, where the input and output tensors are respectively composed of the real and imaginary parts of $\mathbf{H}$ and $\mathbf{V}$, denoted as $\mathbf{D}^{(0)}\in\mathbb{R}^{K\times N \times C_0}$ and $\mathbf{D}^{(L)}\in\mathbb{R}^{K\times N \times C_L}$, where $C_0=C_L=2$. Since the input and output are with the same size, pooling layers were not used, and the size of the input tensor needs to be padded to avoid the size shrinking of the output tensor after convolution. Single stride was considered in these works, a possible reason is that more zeros need to be padded with a larger stride.

Since the inductive bias of the F-LCNN comes from the structural convolutional layers, we consider the LCNN that only consists of linear convolutional layers without fully connected layers. To analyze the connection to the inductive bias of the EdgeGNN, we also consider CCNN that only consists of circular convolutional layers. Both LCNN and CCNN are with single stride and without pooling layers.

\subsubsection{Circular Convolutional Layer}

CCNN is with circular padding. In the $l$th layer, $\mathbf{D}^{(l-1)}$ is first convolved by $C_l$ convolution kernels denoted as  $\mathbf{W}_\mathrm{ker}^{<i>}=[\mathbf{w}^{<i>}_{k,n}]\in\mathbb{R}^{H_\mathrm{conv}{\times}W_\mathrm{conv}{\times}C_{l-1}},i=1,\cdots,C_l$, where $\mathbf{w}_{k,n}^{<i>}\in\mathbb{R}^{C_{l-1}{\times}1}$ is at the $k$th row and the $n$th column of $\mathbf{W}_\mathrm{ker}^{<i>}$, $k=1,\cdots,H_\mathrm{conv}$, $n=1,\cdots,W_\mathrm{conv}$, $H_\mathrm{conv}$ and $W_\mathrm{conv}$ are respectively the height and width of the convolution kernel. Then, $C_l$ matrices each with size of $K{\times}N$ can be obtained, and $\mathbf{D}^{(l)}$ is obtained by stacking these matrices. This process is illustrated in Fig.~\ref{fig_circ_conv}(a), where $C_l=C_{l-1}=1$, and the right bottom of $\mathbf{D}^{(l-1)}$ is padded by the left top part of the matrix  \cite{schubert2019circular}. After padding, the size of $\mathbf{D}^{(l-1)}$ becomes $(K+H_\mathrm{conv}-1)\times(N+W_\mathrm{conv}-1)$. After convolving with the kernel of size $H_\mathrm{conv}{\times}W_\mathrm{conv}$, the size of the output matrix is $K{\times}N$.

\begin{figure}[htb]
\centerline{\includegraphics[width=0.9\linewidth]{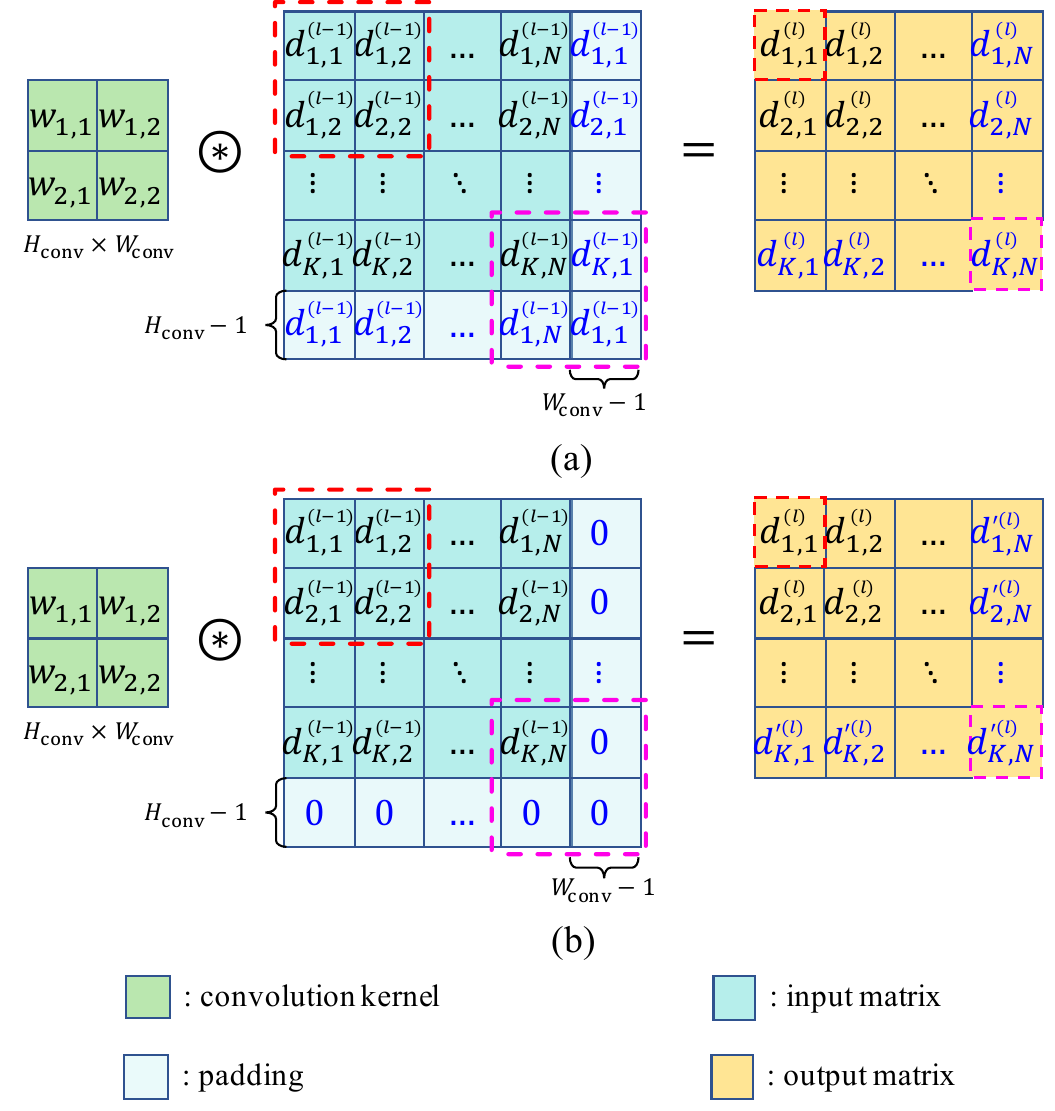}}
\bottomcaption{(a) Circular convolution and (b) linear convolution, both are with single stride and $\mathbf{W}_\mathrm{ker}$ is of size $H_\mathrm{conv}=W_\mathrm{conv}=2$. Since we take $C_{l}=C_{l-1}=1$ as example, the superscript of $\mathbf{W}_\mathrm{ker}^{<i>}=[\mathbf{w}^{<i>}_{k,n}]$ is omitted, meanwhile $\mathbf{d}_{k,n}^{(l-1)},\mathbf{d}_{k,n}^{(l)}$, and $\mathbf{w}_{k,n}$ degenerate into scalars.}
\label{fig_circ_conv}
\end{figure}

\begin{figure}[htb]
\centerline{\includegraphics[width=0.7\linewidth]{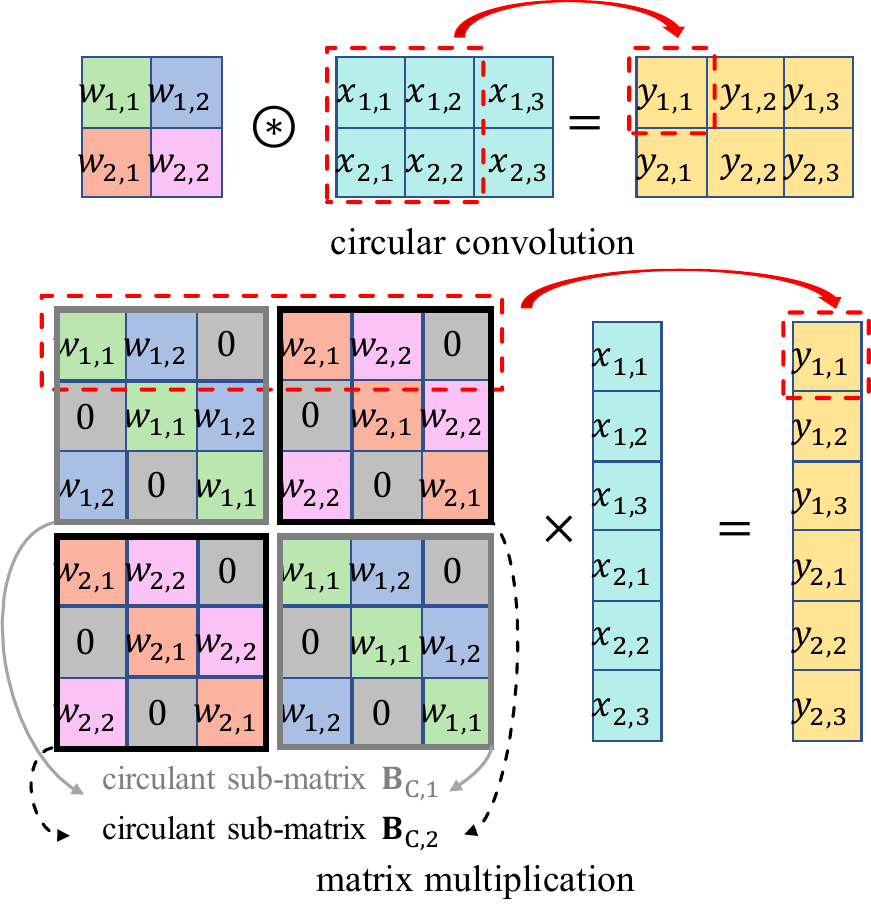}}\vspace{-2mm}
\bottomcaption{Expressing circular convolution into matrix multiplication, $K=2, N=3$, $H_\mathrm{conv}=W_\mathrm{conv}=2$, and $C_{l}=C_{l-1}=1$. $\mathbf{D}^{(l-1)}\in \mathbb{R}^{2\times 3}$ is vectorized, and the convolution kernel is re-expressed as a block circulant matrix with $2\times2$ sub-matrices, each sub-matrix is a circulant matrix of size $3\times3$. Since the size of the kernel is smaller than the size of $\mathbf{D}^{(l-1)}$, zeros are padded into each sub-matrix. Then, the same output can be obtained by circular convolution and matrix multiplication.}
\label{fig_vec_circ_conv}\vspace{2mm}
\end{figure}

Next, we express the circular convolution operation as matrix multiplication, as illustrated in Fig. \ref{fig_vec_circ_conv}.  Then, the relation between the hidden outputs of the $l$th and the $(l-1)$th layers (can also be regarded as an update equation) can be expressed in matrix form as
\begin{equation}\label{eq_forward_circ_cnn}
\mathrm{vec}(\mathbf{D}^{(l)})=\phi\left(\mathbf{W}_\mathrm{C}\mathrm{vec}(\mathbf{D}^{(l-1)})\right).
\end{equation}
The weight matrix and sub-matrices within it are respectively
\begin{IEEEeqnarray*}{c}
\label{eq_circ_structured_parameter}
\setlength{\arraycolsep}{2pt}
\mathbf{W}_\mathrm{C}
=\left[\begin{array}{cccc}
\mathbf{B}_{\mathrm{C},1} & \mathbf{B}_{\mathrm{C},2} & \cdots & \mathbf{B}_{\mathrm{C},N} \\
\mathbf{B}_{\mathrm{C},N} & \mathbf{B}_{\mathrm{C},1} & \cdots & \mathbf{B}_{\mathrm{C},N-1} \\
\vdots & \vdots & \ddots & \vdots \\
\mathbf{B}_{\mathrm{C},2} & \mathbf{B}_{\mathrm{C},3} & \cdots & \mathbf{B}_{\mathrm{C},1} \\
\end{array}\right],\\
\setlength{\arraycolsep}{2pt}
\mathbf{B}_{\mathrm{C},n}=\left[\begin{array}{cccc}
\mathbf{Q}_{n,1} & \mathbf{Q}_{n,2} & \cdots & \mathbf{Q}_{n,K} \\
\mathbf{Q}_{n,K} & \mathbf{Q}_{n,1} & \cdots & \mathbf{Q}_{n,K-1} \\
\vdots & \vdots & \ddots & \vdots \\
\mathbf{Q}_{n,2} & \mathbf{Q}_{n,3} & \cdots & \mathbf{Q}_{n,1} \\
\end{array}\right],\IEEEyesnumber
\setlength{\arraycolsep}{5pt}
\end{IEEEeqnarray*}
\noindent where $\mathbf{Q}_{n,k}=[\mathbf{w}^{<1>}_{k,n},\cdots,\mathbf{w}^{<C_l>}_{k,n}]^\mathsf{T}\in\mathbb{R}^{C_l{\times}C_{l-1}}$ are trainable weight  matrices when $n \leq W_\mathrm{conv}$, $k\leq H_\mathrm{conv}$, and $\mathbf{Q}_{n,k}=\mathbf{0}$ for other values of $k$ and $n$.

\subsubsection{Linear Convolutional Layer}\label{subsubsec_linear_conv}
The only difference between LCNN and CCNN lies in what is padded into the input matrix, as illustrated in Fig.~\ref{fig_circ_conv}(b). When  LCNN is used to learn the precoding policy, the relation between $\mathbf{D}^{(l)}$ and $\mathbf{D}^{(l-1)}$  can be expressed as
\begin{equation}
\mathrm{vec}(\mathbf{D}^{(l)})=\phi\left(\mathbf{W}_\mathrm{L}\mathrm{vec}(\mathbf{D}^{(l-1)})\right).
\end{equation}
\noindent The weight matrix and sub-matrices within it are respectively
\begin{IEEEeqnarray*}{c}\label{eq_linear_structured_parameter}
\setlength{\arraycolsep}{2pt}
\mathbf{W}_\mathrm{L}
=\left[\begin{array}{cccc}
\mathbf{B}_{\mathrm{L},1} & \mathbf{B}_{\mathrm{L},2} & \cdots & \mathbf{B}_{\mathrm{L},N} \\
\mathbf{0} & \mathbf{B}_{\mathrm{L},1} & \cdots & \mathbf{B}_{\mathrm{L},N-1} \\
\vdots & \vdots & \ddots & \vdots \\
\mathbf{0} & \mathbf{0} & \cdots & \mathbf{B}_{\mathrm{L},1} \\
\end{array}\right],\\
\setlength{\arraycolsep}{2pt}
\mathbf{B}_{\mathrm{L},n}=\left[\begin{array}{cccc}
\mathbf{Q}_{n,1} & \mathbf{Q}_{n,2} & \cdots & \mathbf{Q}_{n,K} \\
\mathbf{0} & \mathbf{Q}_{n,1} & \cdots & \mathbf{Q}_{n,K-1} \\
\vdots & \vdots & \ddots & \vdots \\
\mathbf{0} & \mathbf{0} & \cdots & \mathbf{Q}_{n,1} \\
\end{array}\right], \IEEEyesnumber\\
\setlength{\arraycolsep}{5pt}
\end{IEEEeqnarray*}
\noindent where $\mathbf{Q}_{n,k}\in\mathbb{R}^{C_l{\times}C_{l-1}}$ is the same as in \eqref{eq_circ_structured_parameter}.

\vspace{2mm}\subsection{Connection among the EdgeGNN and CNNs}\label{subsec_relation}
It is shown from \eqref{eq_structured_parameter} and \eqref{eq_circ_structured_parameter} that the CCNN degenerates into the EdgeGNN if $\mathbf{B}_{\mathrm{C},2}=\cdots=\mathbf{B}_{\mathrm{C},N}$, $\mathbf{Q}_{1,2}=\cdots=\mathbf{Q}_{1,K}$ in $\mathbf{B}_{\mathrm{C},1}$, and $\mathbf{Q}_{n,2}=\cdots=\mathbf{Q}_{n,K}=\mathbf{0}$ in $\mathbf{B}_{\mathrm{C},n}$ for $n\neq1$.

It is shown from \eqref{eq_circ_structured_parameter} and \eqref{eq_linear_structured_parameter} that linear convolutional layer is equivalent to circular convolutional layer if and only if $\mathbf{B}_{\mathrm{L},n}=\mathbf{B}_{\mathrm{C},n}=\mathbf{0}$, $n\geq2$, and $\mathbf{Q}_{1,k}=\mathbf{0}$, $k \geq 2$ (i.e., $H_\mathrm{conv}{\times}W_\mathrm{conv}=1\times1$). Otherwise, the output matrices of the linear and circular convolutional layers are different at the border  (referred to as \emph{border effect}), as illustrated in Fig.~\ref{fig_circ_conv}.

\vspace{2mm}\section{Understanding the Learning Performance}\label{sec_learning_perf}

In this section, we analyze how the environment and system parameters as well as the inductive biases affect the approximation error and estimation error of learning precoding policy with DNNs.

According to approximation theory \cite{devore2021neural}, the approximation error of a DNN depends on the smoothness of a target function to be learned (i.e., a policy) as well as the expressive capability of the DNN. According to statistical learning theory, the estimation error of a DNN depends on its hypothesis space as well as the number of training samples \cite{anthony1999neural}.

To analyze in which scenarios a precoding policy is non-smooth, which is harder to be learned by any DNN, we consider a widely-investigated precoding problem whose solution is with a well-known structure. Based on the update equations in matrix form, we derive the inductive biases and the estimation error bounds of the EdgeGNN and CNNs.\vspace{-0.2mm}

%To this end, we first introduce the statistical learning theory and approximation theory briefly.

\vspace{2mm}\subsection{Preliminary}
The learning performance of a DNN, which is referred to as excess risk in \cite{shalev2014understanding}, is composed of approximation and estimation errors. For the precoding problem at hand, the learning performance of the DNN can be expressed as
\begin{IEEEeqnarray}{c}
\underbrace{\mathcal{R}(f^\star)-\mathcal{R}(\hat{f})}_{\textrm{excess risk}}=\underbrace{[\mathcal{R}(f^\star)-\mathcal{R}(\hat{f}^\star)]}_{\textrm{approximation error}}+\underbrace{[\mathcal{R}(\hat{f}^\star)-\mathcal{R}(\hat{f})]}_{\textrm{estimation error}},\IEEEeqnarraynumspace
\label{eq_decom_gen_err}
\end{IEEEeqnarray}
\noindent where $\mathcal{R}(f)=\mathbb{E}\left[U(\mathbf{H},f(\mathbf{H}))\right]$ is the average utility of a precoding policy $f(\cdot)$ over $\mathbf{H}$, $f^\star(\cdot)$ is the optimal precoding policy, $\hat{f}(\cdot)$ is a learned precoding policy, and $\hat{f}^\star(\cdot)$ is the precoding policy in the hypothesis space achieving the highest average utility. Hence, $\mathcal{R}(\hat{f}^\star)$ is the upper bound of the learning performance achieved by the DNN.

To help understand, we represent the performance gap of two polices in \eqref{eq_decom_gen_err} by the ``distance'' between them in Fig. \ref{fig_est_appr}. As illustrated in Fig. \ref{fig_est_appr}(a), the \textit{approximation error} is the distance between the target function $f^\star(\cdot)$ to be learned and the function closest to the policy representable by a given DNN, i.e., $\hat{f}^\star(\cdot)$.
The \textit{estimation error} is the distance between the best-fitted policy $\hat{f}^\star(\cdot)$ by a given DNN and the policy $\hat{f}(\cdot)$ learned by the DNN with a given set of training samples.

\begin{figure}[t]
\centerline{\includegraphics[width=1\linewidth]{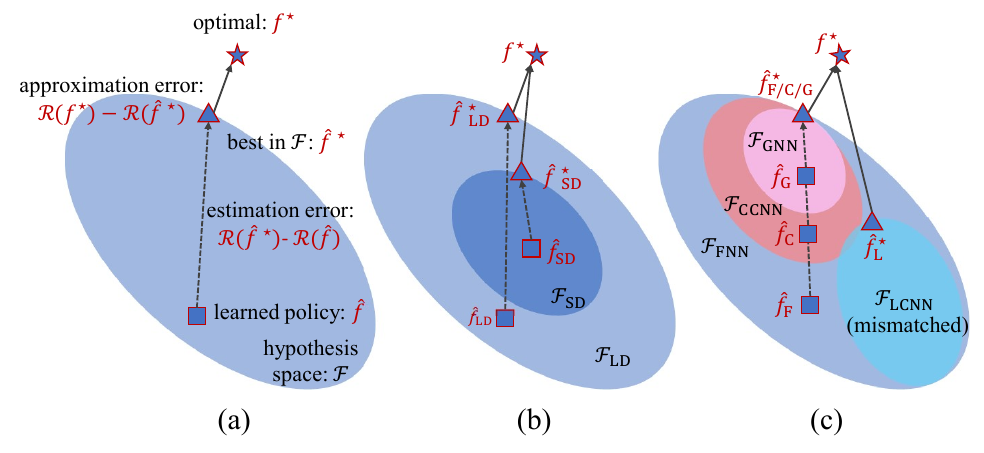}}\vspace{-2mm}
\bottomcaption{(a) Approximation error and estimation error of a DNN. (b) Impact of enlarging the scale of the DNN, ``LD'' and ``SD'' denote the DNNs with large and small scales, respectively. (c) Impact of inductive bias.}
\label{fig_est_appr}\vspace{2mm}
\end{figure}

%According to approximation theory \cite{devore2021neural}, the approximation error of a DNN depends on (i) the \textit{smoothness} of the target function (i.e., the policy to be learned) as well as (ii) the expressive capacity of the DNN. Specifically,
When analyzing the approximation error of a DNN for learning a function with approximation theory \cite{devore2021neural}, the target function is always assumed satisfying specific smoothness properties such as belonging to the Lipschitz space \cite{yarotsky2017error, shen2020deep}, where the smoothness of a function $f(\cdot)$ is measured by its \textit{Lipschitz constant} defined as $L_f\triangleq\sup_{\mathbf{x}_1,\mathbf{x}_2}\frac{||f(\mathbf{x}_1)-f(\mathbf{x}_2)||}{||\mathbf{x}_1-\mathbf{x}_2||}$.

For a given DNN, the approximation error is smaller when learning a smoother target function, as proved in \cite{yarotsky2017error}. On the other hand, for a given target function, the approximation error can be reduced by improving the expressive capability of the DNN \cite{devore2021neural} through, e.g., enlarging the scale of the DNN (i.e., using wider and deeper DNN).

%According to statistical learning theory or computation learning theory, it has been proved that the estimation error of a DNN depends on (i) the number of training samples as well as (ii) the size of its hypothesis space, under the assumption that the training and test samples are with identical distribution \cite{anthony1999neural}. Specifically,
For a given DNN, the estimation error can be reduced by enlarging the training set. On the other hand, for a given number of training samples, one can reduce the estimation error by reducing the hypothesis space of a DNN, and a straightforward way is to reduce the width and depth of the DNN with a specific architecture.

This indicates that there is a trade-off between the approximation error and the estimation error when adjusting the scale of a DNN. Specifically, if a DNN with larger scale is used for learning a policy with a training set, its expressive capability is stronger and thus the approximation error is smaller, which however enlarges the hypothesis space and leads to a higher estimation error, as illustrated in Fig. \ref{fig_est_appr}(b).

To deal with this issue,  one can impose policy-matched inductive biases to the architecture of a DNN, which also affects its hypothesis space, in addition to the scale. Then, a better trade-off can be achieved, as shown in Fig.~\ref{fig_est_appr}(c) and as to be further explained in Section \ref{subsec_rib}.

%In the following, we first analyze in which scenarios a precoding policy is non-smooth and harder to be learned by any DNN, and then analyze the inductive biases and derive the estimation error bounds of the EdgeGNN and CNNs based on their update equations in matrix form.

\subsection{When Precoding Policy is Harder to Learn?}\label{subsec_appr}

Learning a less smooth policy with a DNN to achieve satisfactory performance is more challenging, regardless of the architecture of the DNN. This is because larger scale is required to reduce the approximation error of a DNN with given architecture for learning the policy, which increases the estimation error if the training samples are insufficient.

To analyze when a wireless policy is harder to learn, we analyze the smoothness of the policy under different system and environment parameters. To help understand, we take the following baseband precoding problem as an example, whose optimal precoding matrix has a well-known structure. In fact, the optimal solutions of several precoding problems with other objective functions and constraints have similar structures \cite{bjornson2014optimal}.

We consider a sum rate maximization precoding problem under the constraint of total transmit power $P_\mathrm{t}$,
\begin{IEEEeqnarray}{rcl}
\label{eq_sr_constraint}
\max_{\mathbf{V}} & \quad & \sum_{k=1}^K\mathrm{log}_2\left(1+\frac{|\mathbf{h}_k^\mathsf{H}\mathbf{v}_k|^2}{\sigma^2+\sum_{m{\neq}k}|\mathbf{h}_k^\mathsf{H}\mathbf{v}_m|^2}\right)\IEEEnonumber\\
s.t.& \quad & \sum_{k=1}^K||\mathbf{v}_k||^2{\leq}P_\mathrm{t},
\end{IEEEeqnarray}
whose optimal solution is with the following well-known structure \cite{bjornson2014optimal},
\begin{IEEEeqnarray}{rcl}
\mathbf{V}^\star&=&\left(\mathbf{I}_N+\frac{1}{\sigma^2}\mathbf{H}^\mathsf{H}\boldsymbol{\Lambda}\mathbf{H}\right)^{-1}\mathbf{H}^\mathsf{H}\mathbf{P}^\frac{1}{2}\IEEEnonumber\\
&=&\mathbf{H}^\mathsf{H}\left(\sigma^2\mathbf{I}_K+\boldsymbol{\Lambda}\mathbf{H}\mathbf{H}^\mathsf{H}\right)^{-1}\widetilde{\mathbf{P}}^\frac{1}{2},
\label{eq_duality}
\end{IEEEeqnarray}
\noindent where $\mathbf{P}=\mathrm{diag}(p_1/||(\mathbf{I}_N+\frac{1}{\sigma^2}\mathbf{H}\boldsymbol{\Lambda}\mathbf{H}^\mathsf{H})^{-1}\mathbf{h}_1||^2,\cdots,$ $p_K/||(\mathbf{I}_N+\frac{1}{\sigma^2}\mathbf{H}\boldsymbol{\Lambda}\mathbf{H}^\mathsf{H})^{-1}\mathbf{h}_K||^2)$, $\widetilde{\mathbf{P}}=\sigma^4\mathbf{P}$, $\boldsymbol{\Lambda}=\mathrm{diag}(\lambda_1,\cdots,\lambda_K)$, $p_1,\cdots,p_k$ are the allocated powers, $\sigma^2$ is the noise variance, and $\lambda_1,\cdots,\lambda_k$ are positive parameters satisfying $\sum_{k=1}^K\lambda_k=\sum_{k=1}^K{p_k}=P_\mathrm{t}$.

%The optimal precoding policy $\mathbf{V}^\star=f^\star(\mathbf{H})$ becomes different multivariate functions of $\mathbf{H}$ under different SNRs, as summarized in  \cite{bjornson2014optimal}. When $\sigma^2\to\infty$ (i.e., SNR is very low), the optimal precoder in \eqref{eq_duality} approaches to the maximal-ratio-transmission (MRT) with power allocation, which is
%\begin{equation}
%\mathbf{V}^\star=\mathbf{H}^\mathsf{H}\mathbf{P}^\frac{1}{2}.
%\label{eq_low_snr}
%\end{equation}
%When $\sigma^2\to0$ (i.e., SNR is very high), the optimal precoder  in \eqref{eq_duality} approaches to the zero-forcing beamforming (ZFBF) with power allocation \cite{bjornson2014optimal}, which is
%\begin{equation}
%\mathbf{V}^\star=\mathbf{H}^\mathsf{H}\left(\mathbf{H}\mathbf{H}^\mathsf{H}\right)^{-1}\boldsymbol{\Lambda}^{-1}\widetilde{\mathbf{P}}^\frac{1}{2}.
%\label{eq_high_snr}
%\end{equation}
%When the SNR is moderate, the optimal precoder in \eqref{eq_duality} has a similar structure to the regularized-ZFBF (R-ZFBF) \cite{peel2005vector} with power allocation, which is
%\begin{equation}
%\mathbf{V}_\mathrm{RZF}=\left(\mathbf{I}_N+\frac{P_\mathrm{t}}{K\sigma^2}\mathbf{H}^\mathsf{H}\mathbf{H}\right)^{-1}\mathbf{H}^\mathsf{H}\mathbf{P}^\frac{1}{2}.
%\label{RZF}
%\end{equation}

As can be seen in \eqref{eq_duality}, the precoding policy can be expressed as $\mathbf{V}^\star=f_\mathrm{BF}(f_\mathrm{PA}(\mathbf{H}),\mathbf{H})$, where $f_\mathrm{BF}(\cdot)$ and $f_\mathrm{PA}(\cdot)$ denote the beamforming and power allocation policies, respectively. The beamforming policy involves a matrix inverse function $f_\mathrm{inv}(\mathbf{X}) \triangleq \mathbf{X}^{-1}$ due to using beamforming to avoid multi-user interference unless the SNR is very low, which may be rather non-smooth. To see this, we consider a special case of $f_\mathrm{inv}(\mathbf{X})$ where the input variable is a scalar, i.e., $f_\mathrm{inv}(x)=1/x$. When $x\to 0$, $f_\mathrm{inv}(x)$ is non-smooth, since the derivative of $f_\mathrm{inv}(x)$ is $f^\prime_\mathrm{inv}(x)=-1/x^2\to\infty$, i.e., a small change of $x$ leads to a significant change of $f_\mathrm{inv}(x)$. Unfortunately, in general cases, the input and output of $f_\mathrm{inv}(\mathbf{H})$ are matrices, and when such a multivariate function is non-smooth cannot be observed from its derivative.

\subsubsection{A Metric to Measure the Smoothness of Precoding Policy}
At first glance,  a Lipschitz constant defined as $\sup_{\mathbf{H},\mathbf{H}^\prime}||f^*(\mathbf{H}^\prime)-f^*(\mathbf{H})||/||\mathbf{H}^\prime-\mathbf{H}||$ can be used to measure the smoothness of a precoding policy, since it can be regarded as the first-order difference of $f^\star(\mathbf{H})$ if without the supremum. However, this metric cannot be applied to wireless communications because the input variables of wireless policies are random, then the supremum in the definition is meaningless. Besides, Lipschitz constant quantifies the smoothness of $f^\star(\cdot)$ by the sensitivity of \emph{absolute} change of its output to \emph{absolute} change of its input, and is used in approximation theory to analyze a specific kind of approximation error $\int_\mathbf{H}||f^*(\mathbf{H})-\hat f(\mathbf{H})||\mathrm{d}\mathbf{H}$, i.e., the \emph{absolute} distance between the target and learned functions. However, the performance of a learned wireless policy by a DNN is often measured by its \emph{relative} utility (say sum rate) to a numerical algorithm, since the utility achieved by the optimal solution differs among system settings.

To cope with this issue, we define a \textbf{smoothness coefficient (SC)} to measure the smoothness of the optimal precoding policy $\mathbf{V}=f^\star(\mathbf{H})$ as
\begin{equation}
\mathrm{SC}\triangleq\mathbb{E}\left[\frac{||f^\star(\mathbf{H}^\prime)-f^\star(\mathbf{H})||}{||f^\star(\mathbf{H})||}\bigg/\frac{||\mathbf{H}^\prime-\mathbf{H}||}{||\mathbf{H}||}\right],
\label{eq_def_sc}
\end{equation}
\noindent where $\mathbf{H}^\prime=\mathbf{H}+\Delta\mathbf{H}$, $\Delta\mathbf{H}$ is a small disturbance on $\mathbf{H}$, and the expectation is taken over $\mathbf{H}$ and $\Delta\mathbf{H}$. A policy is non-smooth if its SC is large.

In what follows, we analyze the impact of environment and system parameters on the SC of the precoding policy.

\subsubsection{Impact of SNR on the Smoothness}

The optimal precoding policy $\mathbf{V}^\star=f^\star(\mathbf{H})$ becomes different multivariate functions of $\mathbf{H}$ under different SNRs.

When the SNR is high, i.e., $\sigma^2{\to}0$, the optimal precoder in \eqref{eq_duality} approaches to the zero-forcing beamforming (ZFBF) with power allocation \cite{bjornson2014optimal}, i.e.,
\begin{equation}
\mathbf{V}^\star=\mathbf{H}^\mathsf{H}\left(\mathbf{H}\mathbf{H}^\mathsf{H}\right)^{-1}\boldsymbol{\Lambda}^{-1}\widetilde{\mathbf{P}}^\frac{1}{2},
\label{eq_high_snr}
\end{equation}
\noindent which is a function of $(\mathbf{H}\mathbf{H}^\mathsf{H})^{-1}$. According to the definition of SC and the definition of condition number of a matrix in \cite{trefethen1997numerical}, i.e., $\kappa(\mathbf{H})\triangleq\sup_{\mathbf{H},\mathbf{H}^\prime}\left(\frac{||f_\mathrm{inv}(\mathbf{H}^\prime)-f_\mathrm{inv}(\mathbf{H})||}{||f_\mathrm{inv}(\mathbf{H})||}\bigg/\frac{||\mathbf{H}^\prime-\mathbf{H}||}{||\mathbf{H}||}\right)$, we know that the matrix inverse function is non-smooth if its input matrices are with large condition numbers. Since the values of $\kappa(\mathbf{H}\mathbf{H}^\mathsf{H})$ for random channel realizations may be very large, the policy is non-smooth and thus harder to learn by DNNs, especially when the channels are highly correlated as to be analyzed soon.

When the SNR is moderate, the optimal precoder in \eqref{eq_duality} has a similar structure to the regularized-ZFBF (R-ZFBF) \cite{peel2005vector} with power allocation, i.e.,
\begin{equation}
\mathbf{V}_\mathrm{RZF}=\left(\mathbf{I}_N+\frac{P_\mathrm{t}}{K\sigma^2}\mathbf{H}^\mathsf{H}\mathbf{H}\right)^{-1}\mathbf{H}^\mathsf{H}\mathbf{P}^\frac{1}{2},
\label{RZF}
\end{equation}
\noindent which is a function of $\left(\mathbf{I}_N+\frac{P_\mathrm{t}}{K\sigma^2}\mathbf{H}^\mathsf{H}\mathbf{H}\right)^{-1}$. The policy is easier to learn than the policy in high SNR, because $\mathbf{I}_N+\frac{P_\mathrm{t}}{K\sigma^2}\mathbf{H}^\mathsf{H}\mathbf{H}$ has a smaller condition number than $\mathbf{H}\mathbf{H}^\mathsf{H}$.

When the SNR is low, i.e., $\sigma^2\to\infty$, the optimal precoder in \eqref{eq_duality} approaches to the maximal-ratio-transmission (MRT) with power allocation \cite{bjornson2014optimal}, which is
\begin{equation}
\mathbf{V}^\star=\mathbf{H}^\mathsf{H}\mathbf{P}^\frac{1}{2}.
\label{eq_low_snr}
\end{equation}
\noindent We can see from \eqref{eq_low_snr} that the smoothness of the precoding policy only depends on the power allocation policy, which can be obtained via water filling (WF) as
\begin{equation}
p_k=\left(\mu-\frac{\sigma^2}{||\mathbf{h}_k||^2}\right)^{+},
\label{eq_wf}
\end{equation}
\noindent where $\mu$ is the water level satisfying $\sum_{k=1}^Kp_k=P_\mathrm{t}$. The power allocation policy is non-smooth at low SNR due to the non-differentiable piecewise function $(\cdot)^{+}=\mathrm{max}(\cdot,0)$, which however can be well-learned as to be shown in the next section. This is because the sum rate is insensitive to the power allocation at low SNR.

\subsubsection{Impact of Channel Correlation  on the Smoothness}  To see how the correlation of channel matrix affects the smoothness of the precoding policy, suppose that the channels of two users are highly correlated, say $\mathbf{h}_1\approx\mathbf{h}_2$. Then, it can be easily shown that $\mathbf{h}_1^\mathsf{H}\mathbf{H}^\mathsf{H}$ and $\mathbf{h}_2^\mathsf{H}\mathbf{H}^\mathsf{H}$, which are the first and second row vectors of $\mathbf{H}\mathbf{H}^\mathsf{H}$, are also highly correlated. Consequently, $\kappa(\mathbf{H}\mathbf{H}^\mathsf{H})$ is very large \cite{trefethen1997numerical}. On the other hand, if the channel vectors $\mathbf{h}_k, k=1,\cdots,K$ are mutually orthogonal and with identical norm (i.e., $\mathbf{H}\mathbf{H}^\mathsf{H}=\mathbf{I}$), both $\lambda_k$ and $p_k$ in \eqref{eq_duality} become $P_\mathrm{t}/{K}$ \cite{bjornson2014optimal}, then the precoding policy $f^\star(\mathbf{H})=\sqrt{(P_\mathrm{t}/{K})} \mathbf{H}^\mathsf{H}$ becomes MRT with equal power allocation, which is smooth. This indicates that learning the precoding policy in spatially correlated channels is more challenging than in spatially uncorrelated channels.

In multi-user multi-antenna systems, user scheduling is often used to improve the sum rate or other utilities, say by selecting semi-orthogonal users from candidate users \cite{yoo2006optimality}. Since such a scheduling algorithm can improve the orthogonality of the channel vectors of the scheduled users, it can reduce the approximation error when learning the precoding policy for the scheduled users with a DNN.

\subsubsection{Impact of $N$ and $K$  on the Smoothness}
As proved in \cite{marzetta2010noncooperative}, $\mathbf{h}_k, k=1,\cdots,K$ are asymptotically orthogonal when $N\to\infty$ with fixed $K$. It indicates that when $N \gg K$, the precoding policy is smoother and hence easier to learn by DNNs, even if $N$ and $K$ are very large.

In a nutshell, the answer to Q2 in section \ref{motivation-contribution} is: the
precoding policy is harder to learn by DNNs when the SNR is high, the channels are spatially correlated, or the number of users is comparable to the number of antennas. The hardness comes from the beamforming policy for avoiding multi-user interference, since computing the beamforming matrix needs a matrix inverse operation explicitly or implicitly. This suggests that: to reduce the approximation error (and thus improve the learning performance) of a DNN, an effective approach is to simplify the target function such that it does not consist of matrix inverse function. This can explain why the model-driven DNNs in \cite{kim2020deep,shi2021deep,zhang2021model,yuan2020transfer,xia2020deep,kim2022bipartite} can perform well in optimizing baseband precoding from several problems.

%As a summary, the impacts of system and environment parameters on the smoothness of the precoding policy are listed in Table \ref{tab_smoothness}. In a nutshell, the precoding policy is non-smooth and thus harder to learn for the cases of high SNR, spatially correlated channels, and comparable numbers of users and antennas.
%
%\begin{table}[htbp]
%\topcaption{Impact of System and Environment Parameters on Smoothness of the Precoding Policy}
%\centering
%\begin{tabular}{|c|c|c|c|c|}
%\hline
%\textbf{Parameters} & \multicolumn{4}{c|}{\textbf{Impact on Smoothness}} \\
%\hline
%\multirowcell{2}{\textbf{SNR}} & Low & \multicolumn{2}{c|}{Moderate} & High \\
%\cline{2-5}
%& non-smooth & \multicolumn{2}{c|}{smooth} & non-smooth\\
%\hline
%\multirowcell{2}{\textbf{Channel Correlation}} & \multicolumn{2}{c|}{Weak} & \multicolumn{2}{c|}{Strong} \\
%\cline{2-5}
%& \multicolumn{2}{c|}{smooth} & \multicolumn{2}{c|}{non-smooth} \\
%\hline
%\multirowcell{2}{\textbf{Num. of Users}} & \multicolumn{2}{c|}{Small} & \multicolumn{2}{c|}{Large} \\
%\cline{2-5}
%& \multicolumn{2}{c|}{smooth} & \multicolumn{2}{c|}{non-smooth} \\
%\hline
%\multirowcell{2}{\textbf{Num. of Antennas}} & \multicolumn{2}{c|}{Small} & \multicolumn{2}{c|}{Large} \\
%\cline{2-5}
%& \multicolumn{2}{c|}{non-smooth} & \multicolumn{2}{c|}{smooth} \\
%\hline
%\end{tabular}
%\label{tab_smoothness}
%\end{table}

\vspace{2mm}\subsection{Which DNN is More Efficient for Learning Precoding?}\label{subsec_rib}

Introducing appropriate inductive biases into the architecture of a DNN can achieve a better trade-off between the approximation and estimation errors. This makes the DNN more efficient, i.e., requiring less training samples and trainable parameters to achieve an expected performance. As shown in Fig. \ref{fig_est_appr}(c), by imposing constraints on the function family able to be represented, one can reduce the hypothesis space of the DNN and thus decrease the estimation error for a given training set. If the imposed inductive bias matches with the property of the policy, then the best-fitted policy $\hat{f}^\star(\cdot)$ still lies in the reduced hypothesis space, which leads to less estimation error without changing the approximation error. Mismatched inductive biases, however, incur larger approximation error, which has to be reduced by enlarging the scale of the DNN.

In the sequel, we prove that the EdgeGNN exhibits matched inductive bias to the precoding policy, but CNN does not. Hence, the EdgeGNN has lower estimation error bound and is more efficient than the CNN and FNN.

\subsubsection{Inductive Biases}\label{hypos}

Denote the precoding policy learned by the EdgeGNN as $\widehat{\mathbf{V}}=\hat{f}_\mathrm{G}(\mathbf{H})$. As shown in the following proposition, it satisfies the same PE property as $\mathbf{V}^\star=f^\star(\mathbf{H})$.

\begin{proposition}\label{prop_pe}
When learning over the precoding graph with the EdgeGNN, $\boldsymbol{\Pi}_1^\mathsf{T}\widehat{\mathbf{V}}\boldsymbol{\Pi}_2=\hat{f}_\mathrm{G}(\boldsymbol{\Pi}_1^\mathsf{T}\mathbf{H}\boldsymbol{\Pi}_2)
$.
\end{proposition}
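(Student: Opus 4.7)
The plan is to establish the 2D-PE property by induction on the layer index $l$, working at the level of individual edge features so that the permutations act cleanly on indices. First I would translate the matrix-level permutation into an index-level one: writing $[\boldsymbol{\Pi}_i]_{a,b}=\delta_{a,\pi_i(b)}$ for the permutation $\pi_i$ associated with $\boldsymbol{\Pi}_i$, a short computation gives $[\boldsymbol{\Pi}_1^\mathsf{T}\mathbf{H}\boldsymbol{\Pi}_2]_{k,n}=H_{\pi_1(k),\pi_2(n)}$. Hence feeding $\boldsymbol{\Pi}_1^\mathsf{T}\mathbf{H}\boldsymbol{\Pi}_2$ into the EdgeGNN is the same as feeding $\mathbf{H}$ with the user/antenna index labels permuted by $(\pi_1,\pi_2)$, which yields the base case $\tilde{\mathbf{d}}^{(0)}_{k,n}=\mathbf{d}^{(0)}_{\pi_1(k),\pi_2(n)}$.

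For the inductive step, I would assume $\tilde{\mathbf{d}}^{(l-1)}_{k,n}=\mathbf{d}^{(l-1)}_{\pi_1(k),\pi_2(n)}$. Substituting into \eqref{eq_aggregator} and performing the bijective change of summation variable $n'=\pi_2(n)$ yields
\begin{equation*}
\tilde{\mathbf{u}}_k^{(l)}=\sum_{n=1}^N\mathbf{P}_\mathrm{G}\mathbf{d}^{(l-1)}_{\pi_1(k),\pi_2(n)}=\sum_{n'=1}^N\mathbf{P}_\mathrm{G}\mathbf{d}^{(l-1)}_{\pi_1(k),n'}=\mathbf{u}^{(l)}_{\pi_1(k)},
\end{equation*}
and symmetrically $\tilde{\mathbf{a}}_n^{(l)}=\mathbf{a}^{(l)}_{\pi_2(n)}$. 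Plugging these into \eqref{eq_combiner} gives $\tilde{\mathbf{d}}^{(l)}_{k,n}=\phi(\mathbf{O}_\mathrm{G}\mathbf{d}^{(l-1)}_{\pi_1(k),\pi_2(n)}+\mathbf{u}^{(l)}_{\pi_1(k)}+\mathbf{a}^{(l)}_{\pi_2(n)})=\mathbf{d}^{(l)}_{\pi_1(k),\pi_2(n)}$, closing the induction and delivering $\tilde{\mathbf{D}}^{(L)}=\boldsymbol{\Pi}_1^\mathsf{T}\mathbf{D}^{(L)}\boldsymbol{\Pi}_2$ slice-wise along the feature axis.

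It remains to verify that the power-normalization output layer preserves equivariance. Because $\mathrm{Tr}((\tilde{\mathbf{D}}^{(L)})^\mathsf{H}\tilde{\mathbf{D}}^{(L)})=\mathrm{Tr}(\boldsymbol{\Pi}_2^\mathsf{T}(\mathbf{D}^{(L)})^\mathsf{H}\mathbf{D}^{(L)}\boldsymbol{\Pi}_2)=\mathrm{Tr}((\mathbf{D}^{(L)})^\mathsf{H}\mathbf{D}^{(L)})$ by cyclicity of trace together with $\boldsymbol{\Pi}_i\boldsymbol{\Pi}_i^\mathsf{T}=\mathbf{I}$, the scaling factor is invariant and the claim follows. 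I do not anticipate a substantial obstacle here—the argument is essentially bookkeeping. The only subtle point worth flagging is that the equivariance rests on two structural features simultaneously: sum-pooling over \emph{all} user/antenna indices in \eqref{eq_aggregator} and the sharing of $\mathbf{O}_\mathrm{G},\mathbf{P}_\mathrm{G},\mathbf{Q}_\mathrm{G}$ across those indices—precisely the block-structure encoded in \eqref{eq_structured_parameter}. Were either of these relaxed (e.g., by allowing per-vertex weights), the change-of-summation-variable step would fail and the invariance would be destroyed.
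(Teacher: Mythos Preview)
Your proof is correct and, like the paper's, proceeds layer by layer to establish equivariance; however, the two arguments are packaged differently. You work at the \emph{index level}, translating the matrix permutations into index permutations $(\pi_1,\pi_2)$ and then using a bijective change of summation variable in the aggregation step \eqref{eq_aggregator} to show $\tilde{\mathbf{u}}_k^{(l)}=\mathbf{u}^{(l)}_{\pi_1(k)}$ and $\tilde{\mathbf{a}}_n^{(l)}=\mathbf{a}^{(l)}_{\pi_2(n)}$. The paper instead works at the \emph{matrix level}: it rewrites the layer update as $\mathbf{D}^{(l)}=\overline{\mathbf{O}}_\mathrm{G}^\prime\odot\mathbf{D}^{(l-1)}+\mathbf{1}_K\mathbf{1}_K^\mathsf{T}(\overline{\mathbf{Q}}_\mathrm{G}\odot\mathbf{D}^{(l-1)})+(\overline{\mathbf{P}}_\mathrm{G}\odot\mathbf{D}^{(l-1)})\mathbf{1}_N\mathbf{1}_N^\mathsf{T}$ with $\overline{\mathbf{A}}$ a constant-block tiling of $\mathbf{A}$, and then uses two algebraic facts---that $\overline{\mathbf{A}}\odot(\boldsymbol{\Pi}_1^\mathsf{T}\mathbf{D}\boldsymbol{\Pi}_2)=\boldsymbol{\Pi}_1^\mathsf{T}(\overline{\mathbf{A}}\odot\mathbf{D})\boldsymbol{\Pi}_2$ and that permutations fix all-ones vectors---to push $\boldsymbol{\Pi}_1^\mathsf{T},\boldsymbol{\Pi}_2$ through each term. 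Your route is more elementary and makes the role of sum-pooling transparent; the paper's route is more compact and ties the equivariance directly to the block structure displayed in \eqref{eq_structured_parameter}. One point you handle that the paper's appendix glosses over is the output normalization: your trace-cyclicity check that the scalar $\sqrt{P_\mathrm{t}/\mathrm{Tr}((\mathbf{D}^{(L)})^\mathsf{H}\mathbf{D}^{(L)})}$ is permutation-invariant is a genuine (if easy) ingredient needed to conclude equivariance of the full map $\hat{f}_\mathrm{G}$.
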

\begin{proof}\vspace{-2mm}
See Appendix A.
\end{proof}\vspace{-2mm}

To analyze the inductive bias of the CCNN,  we express the input-output relation of a circular convolutional layer in \eqref{eq_forward_circ_cnn} with shifted input as
\begin{IEEEeqnarray*}{l}
\label{eq_shift_equi}
\phi\left(\mathbf{W}_\mathrm{C}\mathrm{vec}(\mathbf{S}_1^\mathsf{T}\mathbf{D}^{(l-1)}\mathbf{S}_2)\right)\\
\quad\overset{\mathrm{(a)}}{=}\phi\left(\mathbf{W}_\mathrm{C}(\mathbf{S}_1\otimes\mathbf{S}_2)\mathrm{vec}(\mathbf{D}^{(l-1)})\right)\\
\quad\overset{\mathrm{(b)}}{=}\phi\left((\mathbf{S}_1\otimes\mathbf{S}_2)\mathbf{W}_\mathrm{C}\mathrm{vec}(\mathbf{D}^{(l-1)})\right)\\
\quad=(\mathbf{S}_1\otimes\mathbf{S}_2)\mathrm{vec}(\mathbf{D}^{(l)})\overset{\mathrm{(c)}}{=}\mathrm{vec}(\mathbf{S}_1^\mathsf{T}\mathbf{D}^{(l)}\mathbf{S}_2),\IEEEeqnarraynumspace\IEEEyesnumber
\end{IEEEeqnarray*}
\noindent where both  $\mathbf{S}_1$ and $\mathbf{S}_2$ are circulant matrices representing the shift operations,
and the multiplication of matrix $\mathbf{S}_1^\mathsf{T}$ (or $\mathbf{S}_2$) with tensor $\mathbf{D}^{(l-1)}\in\mathbb{R}^{K{\times}N{\times}C_{l-1}}$ is obtained by multiplying $\mathbf{S}_1^\mathsf{T}$ (or $\mathbf{S}_2$) with $C_{l-1}$ matrices of size $K{\times}N$ within the tensor.
$(a)$ and $(c)$ hold because $\mathrm{vec}(\mathbf{ABC})=(\mathbf{C}^\mathsf{T}\otimes\mathbf{A})\mathrm{vec}(\mathbf{B})$, and $(b)$ holds because both $\mathbf{S}_1\otimes\mathbf{S}_2$ and $\mathbf{W}_\mathrm{C}$ are circulant matrices whose multiplication is commutative.

By stacking $L$ layers, it is not hard to prove that the policy learned by the CCNN (denoted as $\widehat{\mathbf{V}}=\hat{f}_\mathrm{C}(\mathbf{H})$) is shift equivariant to $\mathbf{H}$, i.e., $\mathbf{S}_1^\mathsf{T}\widehat{\mathbf{V}}\mathbf{S}_2=\hat{f}_\mathrm{C}(\mathbf{S}_1^\mathsf{T}\mathbf{H}\mathbf{S}_2)$.

Since $\mathbf{S}_1$ and $\mathbf{S}_2$ are special permutation matrices, the hypothesis space of the CCNN contains the hypothesis space of the EdgeGNN when $\mathbf{W}_\mathrm{C}$ and $\mathbf{W}_\mathrm{G}$ are with the same size. This can also be seen from previous analysis in section \ref{subsec_relation} that $\mathbf{W}_\mathrm{C}$ degenerates into $\mathbf{W}_\mathrm{G}$ after sharing more parameters.
For the CCNN with strides larger than one,
the shift-equivariance property holds only for specific
shift matrices \cite{zhang2019making},\footnote{Take $\mathbf{H}\in\mathbb{C}^{4\times4}$ for example, $\mathbf{S}_1,\mathbf{S}_2\in\mathcal{S}$ can be any shift matrices with dimension four for single-stride CCNN. When the stride is two, $\mathbf{S}_1$ and $\mathbf{S}_2$ can only take values from $\mathcal{S}^\prime\subsetneq\mathcal{S}$ representing $2m$-unit shifts, $m\in\mathbb{Z}$. A permutation matrix $\boldsymbol{\Pi}$ is provided as follows, from which we can see that shift matrices are special permutation matrices.
\arraycolsep=1.5pt\def\arraystretch{0.3}
\begin{IEEEeqnarray*}{c}
\mathcal{S}=\left\{\left[\begin{array}{cccc}
1 & 0 & 0 & 0\\
0 & 1 & 0 & 0\\
0 & 0 & 1 & 0\\
0 & 0 & 0 & 1
\end{array}\right],\left[\begin{array}{cccc}
0 & 1 & 0 & 0\\
0 & 0 & 1 & 0\\
0 & 0 & 0 & 1\\
1 & 0 & 0 & 0
\end{array}\right],\left[\begin{array}{cccc}
0 & 0 & 1 & 0\\
0 & 0 & 0 & 1\\
1 & 0 & 0 & 0\\
0 & 1 & 0 & 0
\end{array}\right],\left[\begin{array}{cccc}
0 & 0 & 0 & 1\\
1 & 0 & 0 & 0\\
0 & 1 & 0 & 0\\
0 & 0 & 1 & 0
\end{array}\right]\right\},\\
\mathcal{S}^\prime=\left\{\left[\begin{array}{cccc}
1 & 0 & 0 & 0\\
0 & 1 & 0 & 0\\
0 & 0 & 1 & 0\\
0 & 0 & 0 & 1
\end{array}\right],\left[\begin{array}{cccc}
0 & 0 & 1 & 0\\
0 & 0 & 0 & 1\\
1 & 0 & 0 & 0\\
0 & 1 & 0 & 0
\end{array}\right]\right\},\boldsymbol{\Pi}=\left[\begin{array}{cccc}
0 & 0 & 1 & 0\\
1 & 0 & 0 & 0\\
0 & 1 & 0 & 0\\
0 & 0 & 0 & 1
\end{array}\right].\IEEEnonumber\label{eq_eg_per_mat}
\end{IEEEeqnarray*}
\setlength{\arraycolsep}{5pt} } which leads to a larger hypothesis space. As a consequence, the EdgeGNN can strike a better balance between the approximation error and estimation error and hence require less trainable parameters and training samples than CCNN to achieve the same learning performance for precoding.
%Therefore, the CCNN can also approximate permutation equivariant functions. In other words, the CCNN is possible to learn the precoding policy without approximation error.

The LCNN is not shift equivariant and its hypothesis space differs from the CCNN, as illustrated in Fig.~\ref{fig_est_appr}(c). Only when $N$ and $K$ are large such that the border effect is negligible, the LCNN is approximately shift equivariant.
Due to the mismatched inductive bias, the LCNN needs larger scale  (i.e., $L$ and $C_l$) than the CCNN to avoid the approximation error.

In existing works using CNNs to learn precoding policy, fully connected layers are cascaded after LCNN. Then, the hypothesis space of the F-LCNN contains 2D-PE functions since FNN is a universal approximator \cite{hornik1989multilayer} and the approximation error can be reduced, but the number of trainable parameters and hence the estimation error increases.

As a universal approximator, the hypothesis space of FNN contains all those of the DNNs.

%In summary, with the smallest hypothesis space owing to the appropriate inductive bias for precoding, the estimation error of the EdgeGNN is smaller than CNN and FNN, as to be validated by the forthcoming theoretical analysis.

\subsubsection{Estimation Errors}
We derive the estimation error bounds of the EdgeGNN, CCNN and FNN by deriving their Rademacher complexities, while the bound of LCNN is the same as CCNN. Rademacher complexity reflects the size of the hypothesis space of a DNN, which depends on the Lipschitz continuity of the activation functions and the norm  of weight matrix in each layer \cite{lin2019generalization}.

Denote $\widetilde{\mathbf{H}}=\left[\mathrm{vec}(\mathbf{H}_1),\cdots,
\mathrm{vec}(\mathbf{H}_M)\right]\in\mathbb{C}^{KN{\times}M}$ as a training set, where $M$ is the number of training samples. According to the structure of the weight matrices in \eqref{eq_structured_parameter} and \eqref{eq_circ_structured_parameter}, it is not hard to obtain the following F-norm bounded condition for the EdgeGNN, CCNN, and FNN.

\begin{lemma}\vspace{-2mm}[\textbf{F-norm bounded condition}]\label{lamma_norm}
If in the EdgeGNN $||\mathbf{O}_\mathrm{G}^{(l)\prime}||,||\mathbf{P}_\mathrm{G}^{(l)}||,||\mathbf{Q}_\mathrm{G}^{(l)}||{\leq}{a_l}/{\sqrt{KN(K+N-1)}}$,
in the CCNN $||\mathbf{W}_\mathrm{ker}^{<i>(l)}||{\leq}a_l/\sqrt{KNC_l}$, and in FNN $||\mathbf{W}_\mathrm{F}^{(l)}||{\leq}a_l$, then $||\mathbf{W}_{(\cdot)}^{(l)}||{\leq}a_l$, where the subscript $(\cdot)$ can be ``G'', ``C'', and ``F'' to represent the EdgeGNN, CCNN, and FNN, respectively.
\end{lemma}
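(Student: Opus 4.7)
The proof will reduce everything to counting how many times each independent parameter block appears inside the full weight matrix, and then invoking the fact that the squared Frobenius norm is simply additive over disjoint submatrix entries. I will treat the three cases in turn.

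For the EdgeGNN, I would first read off from \eqref{eq_structured_parameter} the multiplicities of the three distinct parameter blocks inside $\mathbf{W}_\mathrm{G}^{(l)}$: the outer $N\times N$ block pattern contains $N$ diagonal copies of $\mathbf{B}_\mathrm{G}$ and $N(N-1)$ off-diagonal copies of $\mathbf{C}_\mathrm{G}$; inside each $\mathbf{B}_\mathrm{G}$ there are $K$ copies of $\mathbf{O}_\mathrm{G}^\prime$ and $K(K-1)$ copies of $\mathbf{Q}_\mathrm{G}$; inside each $\mathbf{C}_\mathrm{G}$ there are $K$ copies of $\mathbf{P}_\mathrm{G}$ (the rest being zero). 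Hence
\begin{equation*}
\|\mathbf{W}_\mathrm{G}^{(l)}\|^2 = NK\|\mathbf{O}_\mathrm{G}^{(l)\prime}\|^2 + NK(K-1)\|\mathbf{Q}_\mathrm{G}^{(l)}\|^2 + NK(N-1)\|\mathbf{P}_\mathrm{G}^{(l)}\|^2.
\end{equation*}
Substituting the hypothesized bound $a_l/\sqrt{KN(K+N-1)}$ for each of the three factor norms, the coefficients collapse as $NK[\,1+(K-1)+(N-1)\,]=NK(K+N-1)$, which exactly cancels the denominator and yields $\|\mathbf{W}_\mathrm{G}^{(l)}\|^2 \le a_l^2$.

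For the CCNN, I would perform the analogous tally on the doubly block-circulant structure in \eqref{eq_circ_structured_parameter}. Each sub-block $\mathbf{Q}_{n,k}$ appears exactly $N$ times in the outer $N\times N$ circulant pattern and, within each $\mathbf{B}_{\mathrm{C},n}$, exactly $K$ times in the inner $K\times K$ circulant pattern, i.e., $KN$ copies altogether. Therefore $\|\mathbf{W}_\mathrm{C}^{(l)}\|^2 = KN \sum_{n,k}\|\mathbf{Q}_{n,k}\|^2$. The key observation I need is the reindexing identity
\begin{equation*}
\sum_{n=1}^{W_\mathrm{conv}}\sum_{k=1}^{H_\mathrm{conv}}\|\mathbf{Q}_{n,k}\|^2 = \sum_{n,k}\sum_{i=1}^{C_l}\|\mathbf{w}_{k,n}^{<i>}\|^2 = \sum_{i=1}^{C_l}\|\mathbf{W}_\mathrm{ker}^{<i>(l)}\|^2,
\end{equation*}
which simply swaps the order of summation between the kernel index $i$ and the spatial indices $(k,n)$, using the definition of $\mathbf{Q}_{n,k}$ as the stack of the $i$-th kernel entries at position $(k,n)$. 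Plugging in $\|\mathbf{W}_\mathrm{ker}^{<i>(l)}\|\le a_l/\sqrt{KNC_l}$ then gives $\|\mathbf{W}_\mathrm{C}^{(l)}\|^2 \le KN\cdot C_l\cdot a_l^2/(KNC_l) = a_l^2$. The FNN case is immediate since $\mathbf{W}_\mathrm{F}^{(l)}$ carries no shared structure.

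The only real subtlety, and therefore the main thing to be careful about, is the CCNN bookkeeping: one must verify both that the block-circulant structure genuinely contains $KN$ identical copies of every $\mathbf{Q}_{n,k}$ (so that Frobenius addition is exact rather than an overcount of shared entries), and that the padding-with-zeros convention for $\mathbf{Q}_{n,k}$ with $n>W_\mathrm{conv}$ or $k>H_\mathrm{conv}$ cleanly reduces the sum over $(n,k)\in[N]\times[K]$ to a sum over $(n,k)\in[W_\mathrm{conv}]\times[H_\mathrm{conv}]$. Once these two points are settled, the remaining calculation is arithmetic cancellation, and the lemma follows uniformly across all three architectures.
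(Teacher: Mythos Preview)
Your proposal is correct and is precisely the block-counting argument the paper has in mind. The paper does not actually supply a proof of Lemma~\ref{lamma_norm}; it only asserts that the result is ``not hard to obtain'' from the weight-matrix structures in \eqref{eq_structured_parameter} and \eqref{eq_circ_structured_parameter}, and the identical decomposition $\|\mathbf{W}_\mathrm{G}^{(l)}\|^2 = KN\|\mathbf{O}_\mathrm{G}^{(l)\prime}\|^2 + KN(K-1)\|\mathbf{Q}_\mathrm{G}^{(l)}\|^2 + KN(N-1)\|\mathbf{P}_\mathrm{G}^{(l)}\|^2$ (up to a $\mathbf{P}_\mathrm{G}\leftrightarrow\mathbf{Q}_\mathrm{G}$ labeling swap) reappears verbatim in Appendix~\ref{app_b} when bounding $\|\mathbf{W}_\mathrm{G}^{(l)}-\widehat{\mathbf{W}}_\mathrm{G}^{(l)}\|$ for the covering-number argument.
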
\vspace{-1mm}

In the following proposition, we provide the upper bounds of the estimations errors of each DNN.

\begin{proposition}\vspace{-1mm}\label{prop_est_err}
If (i) the activation functions $\phi^{(l)}(\cdot)$ are with Lipschitz constants $L_\phi^{(l)}<\infty$ and satisfy $\phi^{(l)}(0)=0, l=1,\cdots,L$, (ii) the loss function is with Lipschitz constant $L_\mathrm{loss}<\infty$, (iii) the weight matrices $||\mathbf{W}_{(\cdot)}^{(l)}||_\sigma{\leq}s_l$, and (iv) the F-norm bounded condition in Lemma \ref{lamma_norm} holds such that $||\mathbf{W}_{(\cdot)}^{(l)}||{\leq}a_l$, then the estimation error is upper bounded with probability $1-\delta$ as,
\begin{IEEEeqnarray}{l}\label{eq_est_err}
\mathcal{R}(\hat{f}^\star)-\mathcal{R}(\hat{f}){\leq}\IEEEnonumber\\
64M^{-\frac{5}{8}}\left(\frac{4L^2||\widetilde{\mathbf{H}}||\left(\prod_{l=1}^LL_\phi^{(l)}s_l\right)R_{(\cdot)}}{L_\mathrm{loss}}\right)^{\frac{1}{4}}+6\sqrt{\frac{\mathrm{ln}(1/\delta)}{2M}},\IEEEeqnarraynumspace
\end{IEEEeqnarray}
\noindent where $R_\mathrm{G}=\sum_{l=1}^L{9\sqrt{3}C_{l-1}^2C_l^2a_l}/{s_l}$ for the EdgeGNN, $R_\mathrm{C}=\sum_{l=1}^L{H_\mathrm{conv}^2W_\mathrm{conv}^2C_{l}^2C_{l-1}^2a_l}/{s_l}$ for the CCNN, and $R_\mathrm{F}=\sum_{l=1}^L{K^4N^4C_{l-1}^2C_l^2a_l}/{s_l}$ for FNN.
\end{proposition}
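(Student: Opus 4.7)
The plan is to derive the estimation error bound via a two-step empirical-process argument. First, I would apply the standard Rademacher-based uniform deviation inequality in the form $\mathcal{R}(\hat{f}^\star) - \mathcal{R}(\hat{f}) \leq 2\sup_{f\in\mathcal{F}_{(\cdot)}}|\mathcal{R}(f) - \hat{\mathcal{R}}(f)|$, which by McDiarmid followed by symmetrization is bounded by $2\mathcal{R}_M(\ell\circ\mathcal{F}_{(\cdot)}) + C\sqrt{\ln(1/\delta)/(2M)}$, where $\mathcal{F}_{(\cdot)}$ is the hypothesis class corresponding to the EdgeGNN, CCNN, or FNN. The $L_\mathrm{loss}$-Lipschitzness assumption on the loss, combined with Talagrand's contraction lemma, lets me replace the composed class with $L_\mathrm{loss}\mathcal{R}_M(\mathcal{F}_{(\cdot)})$. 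This already yields the $6\sqrt{\ln(1/\delta)/(2M)}$ term in \eqref{eq_est_err}, and reduces the proposition to bounding $\mathcal{R}_M(\mathcal{F}_{(\cdot)})$ for each of the three architectures.

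The second and main step is to bound $\mathcal{R}_M(\mathcal{F}_{(\cdot)})$ by applying the generic DNN Rademacher bound of \cite{lin2019generalization}, which combines a layer-wise peeling argument with a Dudley chaining integral over covering numbers of the weight-matrix classes. Peeling uses $\phi^{(l)}(0) = 0$ and $L_\phi^{(l)}$-Lipschitzness to pass the Rademacher process through the activations, while the spectral bound $\|\mathbf{W}_{(\cdot)}^{(l)}\|_\sigma \leq s_l$ carries it through the linear maps; iterating produces the propagation factor $\prod_{l=1}^L L_\phi^{(l)} s_l$ scaled by the input norm $\|\widetilde{\mathbf{H}}\|$. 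The non-standard $M^{-5/8}$ rate together with the outer quarter-power envelope in \eqref{eq_est_err} is precisely the shape delivered by that chaining integral, so matching my bound to the generic theorem's template is mostly bookkeeping once the covering number of the per-layer weight class is in hand.

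The architecture dependence enters only through the covering-number term $R_{(\cdot)}$, and obtaining the correct constants here is where I expect the main difficulty. By Lemma \ref{lamma_norm}, the F-norm budget $a_l$ on the full weight matrix $\mathbf{W}_{(\cdot)}^{(l)}$ can be traded for F-norm budgets on the free parameter blocks: for the EdgeGNN the three independent blocks $\mathbf{O}_\mathrm{G}^\prime,\mathbf{P}_\mathrm{G},\mathbf{Q}_\mathrm{G}$ each carry $C_{l-1}C_l$ scalars, for the CCNN the $C_l$ kernels carry $H_\mathrm{conv}W_\mathrm{conv}C_{l-1}$ scalars each, and for the FNN the full $KNC_l \times KNC_{l-1}$ matrix carries $K^2N^2C_{l-1}C_l$ scalars. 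Standard F-norm covering-number counts, squared as required by the chaining integrand and weighted by $a_l/s_l$, yield $9\sqrt{3}C_{l-1}^2C_l^2$ (three blocks squared, with the $\sqrt{3}$ coming from a triangle-inequality split across the three blocks when translating the target resolution), $H_\mathrm{conv}^2W_\mathrm{conv}^2C_{l-1}^2C_l^2$, and $K^4N^4C_{l-1}^2C_l^2$ respectively. The delicate part is the EdgeGNN bookkeeping: one must verify that a joint $\epsilon$-cover of $(\mathbf{O}_\mathrm{G}^\prime,\mathbf{P}_\mathrm{G},\mathbf{Q}_\mathrm{G})$ induces an $\epsilon$-cover of $\mathbf{W}_\mathrm{G}$ in F-norm without multiplying by the block-repetition counts of $K$ and $N$ visible in \eqref{eq_structured_parameter}, since it is precisely this collapse that distinguishes the EdgeGNN's effective dimension from the CCNN's and FNN's and produces the parameter-sharing advantage that the proposition quantifies.
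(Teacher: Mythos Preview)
Your proposal is correct and follows essentially the same route as the paper: reduce the estimation error to a Rademacher complexity via the two-sided uniform deviation bound (the paper invokes Lemma~3.1 of \cite{bartlett2017spectrally} for this step), then bound the Rademacher complexity by feeding per-layer covering numbers into the chaining result of \cite{lin2019generalization} (their Lemma~18), with the architecture entering only through the covering number of the free-parameter blocks.

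One clarification on what you flag as the ``delicate part'': an $\epsilon$-cover of $(\mathbf{O}_\mathrm{G}',\mathbf{P}_\mathrm{G},\mathbf{Q}_\mathrm{G})$ does \emph{not} induce an $\epsilon$-cover of $\mathbf{W}_\mathrm{G}$ in F-norm; the block-repetition structure in \eqref{eq_structured_parameter} inflates it to a $\sqrt{KN(K{+}N{-}1)}\,\epsilon$-cover. The $K,N$-independence you want comes instead from the fact that Lemma~\ref{lamma_norm} has already scaled the block budgets down by exactly $1/\sqrt{KN(K{+}N{-}1)}$, so the ratio of F-norm radius to covering resolution that enters the volumetric bound is $a_l/\epsilon$ with the $K,N$ factors cancelling. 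This is how the paper obtains $\ln\mathcal{N}(\mathcal{W}_\mathrm{H},\epsilon)\leq 3C_lC_{l-1}\ln(1+2\sqrt{3}a_l\|\mathrm{vec}(\mathbf{D}^{(l-1)})\|/\epsilon)$, which then feeds into Lemma~18 of \cite{lin2019generalization} to give $R_\mathrm{G}$.
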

\begin{proof}\vspace{-2mm}
See Appendix \ref{app_b}.
\end{proof}\vspace{-2mm}

The conditions (i) and (ii) are satisfied for the commonly used activation functions (e.g., Sigmoid, ReLU, and variations of ReLU) and loss functions for learning wireless policies (e.g., negative sum-rate), respectively.

From Proposition \ref{prop_est_err}, we can obtain that the estimation error bound of FNN is $\mathcal{O}\left((R_\mathrm{F}/R_\mathrm{G})^{1/4}\right)=\mathcal{O}(KN)$ times larger than the EdgeGNN and $\mathcal{O}(KN/\sqrt{H_\mathrm{conv}W_\mathrm{conv}})$ times larger than CNNs, when $K$ or $N$ are large such that the first term in the right hand side of \eqref{eq_est_err} dominates. On the other hand, to achieve the same estimation error bound, FNN requires $\mathcal{O}(K^2N^2)$ and $\mathcal{O}(K^2N^2/H_\mathrm{conv}W_\mathrm{conv})$ times more training samples than the EdgeGNN and CNNs, respectively. The results can be explained by the fact that the number of trainable parameters in $\mathbf{W}_\mathrm{F}$ is $\mathcal{O}(K^2N^2)$ times larger than $\mathbf{W}_\mathrm{G}$ and $\mathcal{O}(K^2N^2/H_\mathrm{conv}W_\mathrm{conv})$ times larger than $\mathbf{W}_\mathrm{C}$, where the numbers of trainable parameters of the EdgeGNN and CCNN can be obtained from \eqref{eq_structured_parameter} and \eqref{eq_circ_structured_parameter}.

In summary, the answer to Q1 in section \ref{motivation-contribution} is: a DNN with matched inductive bias to a policy needs fewer training samples and trainable parameters to achieve the same estimation error bound. Previous analyses show that parameter sharing imposes inductive bias on DNNs, and more parameters will be shared if more PE properties of a policy can be satisfied. This suggests that to reduce the estimation error (and thus improve the learning performance) or reduce the sample complexity and trainable parameters, the permutation property of wireless policies should be exploited when designing a DNN. However, no existing works for wireless communications have ever theoretically analyzed the connection between the PE property and the learning performance or the sample complexity.
Most existing works even did not notice the connection, and hence only a part of permutation properties of the wireless policies were considered \cite{shen2022graph, jiang2021learning}.

\vspace{2mm}\subsection{Extension of the Analysis to Other Precoding Problems}
While we take sum rate maximization precoding in MISO system as an example, our analyses are also applicable to the problems with other utility functions and constraints, since all baseband precoding policies in MISO system exhibit the 2D-PE property. Moreover, the precoding policies in other system settings also have PE properties and involve matrix inverse functions in order to avoid multi-user interference \cite{ICC2009}.

For example, in multi-user MIMO systems where each user is equipped with $N_\mathrm{r}>1$ receive antennas, the channel matrix of $K$ users is $\mathbf{H}\in\mathbb{C}^{KN_\mathrm{r}{\times}N}$ and the precoding matrix for all the users is $\mathbf{V}_\mathrm{M}\in\mathbb{C}^{KN_\mathrm{r}{\times}N}$. Then, the precoding policy $\mathbf{V}_\mathrm{M}^\star=f_\mathrm{M}^\star(\mathbf{H})$ satisfies $(\boldsymbol{\Pi}_1\otimes\boldsymbol{\Pi}_2)^\mathsf{T}\mathbf{V}_\mathrm{M}^\star\boldsymbol{\Pi}_3=f_\mathrm{M}^\star((\boldsymbol{\Pi}_1\otimes\boldsymbol{\Pi}_2)^\mathsf{T}\mathbf{H}\boldsymbol{\Pi}_3)$, where $\boldsymbol{\Pi}_1\in\mathbb{R}^{K{\times}K},\boldsymbol{\Pi}_2\in\mathbb{R}^{N_\mathrm{r}{\times}N_\mathrm{r}}$, and $\boldsymbol{\Pi}_3\in\mathbb{R}^{N{\times}N}$ are arbitrary permutation matrices.

Another example is the hybrid precoding in millimeter-wave MISO system, where a BS with $N$ antennas and $N_\mathrm{RF}$  radio frequency chains serves $K$ single-antenna users. When the analog and baseband precoders $\mathbf{V}_\mathrm{RF}\in\mathbb{C}^{N_\mathrm{RF}{\times}N}$ and $\mathbf{V}_\mathrm{D}\in\mathbb{C}^{K{\times}N_\mathrm{RF}}$ are jointly optimized, the precoding policy $(\mathbf{V}^\star_\mathrm{D},\mathbf{V}^\star_\mathrm{RF})=f_\mathrm{H}^\star(\mathbf{H})$ satisfies $(\boldsymbol{\Pi}_1^\mathsf{T}\mathbf{V}^\star_\mathrm{D}\boldsymbol{\Pi}_2,\boldsymbol{\Pi}_2^\mathsf{T}\mathbf{V}^\star_\mathrm{RF}\boldsymbol{\Pi}_3)=f_\mathrm{H}^\star(\boldsymbol{\Pi}_1^\mathsf{T}\mathbf{H}\boldsymbol{\Pi}_3)$. For hybrid precoding in millimeter-wave MIMO system, the policy is with more complicated PE property.

By judiciously designing a graph and the parameter sharing in each layer for a policy, a GNN can exhibit policy-matched inductive bias, whose estimation error is less than CNN and FNN with the same number of training samples. Since the precoding matrix $\mathbf{V}_\mathrm{M}$ in  multi-user MIMO systems and the baseband precoding matrix $\mathbf{V}_\mathrm{D}$ in millimeter-wave multi-user multi-antenna systems are obtained via matrix inverse operation explicitly or implicitly unless the SNR is very low, the channel spatial correlation as well as the numbers of users and antennas also affect the smoothness of the corresponding policies and hence affect the approximation errors when learning the precoding policies.

\vspace{2mm}\section{Simulation Results}
In this section, we validate our analyses by comparing the system performance of a precoding policy learned by the EdgeGNN, CCNN, LCNN, F-LCNN \cite{kim2020deep} and FNN \cite{kim2020deep}, as well as their training complexity.

To this end, we take the sum rate maximization problem in \eqref{eq_sr_constraint} as an example. To satisfy the power constraint, the output layer of each DNN can be normalized as $\widehat{\mathbf{V}}=\sqrt{P_\mathrm{t}/\mathrm{Tr}\left((\mathbf{D}^{(L)})^\mathsf{H}\mathbf{D}^{(L)}\right)}\mathbf{D}^{(L)}\in\mathbb{R}^{K\times N \times 2}$.

All the DNNs are trained via unsupervised learning with $N_\mathrm{tr}=200,000$ training samples, where the negative sum rate is used as the loss function, and the expression of the sum rate is shown in \eqref{eq_sr_constraint}. The learning performance is evaluated with 2000 test samples. The activation function of hidden layers is leaky ReLU. The initial learning rate is $4\times10^{-3}$ and the batch size is 256. The CNNs are with kernel size $H_\mathrm{conv}=W_\mathrm{conv}=3$. All the DNNs are well-trained\footnote{If a DNN performs good on training set but poorly on validation set, then we can reduce the scale of the DNN or increase the training samples. If a DNN performs poorly on both training and validation sets, then enlarging the DNN scale may be helpful. Otherwise, we need to re-tune other hyper-parameters.} using Adam optimizer for the setting of SNR = 10 dB, $N=8$ and $K=4$, and the other fine-tuned hyper-parameters are provided in Table \ref{tab_hp}, where ``3+1'' or ``3+3'' means three convolutional layers plus one or three fully connected layers. When calculating the SC with \eqref{eq_def_sc}, the elements of $\Delta\mathbf{H}$ are generated according to  $\mathcal{CN}(0,\sigma^2_\Delta)$ where $\sigma_\Delta^2=1\times10^{-4}$, and the expectation in \eqref{eq_def_sc} is replaced by empirical average over 5000 realizations of $\mathbf{H}$ and $\Delta\mathbf{H}$.

\begin{table}[htbp]
\topcaption{Hyper-parameters}
\centering
\begin{tabular}{|c|c|c|c|c|c|c|}
\hline
\multicolumn{2}{|c|}{\textbf{Hyper}} & \multicolumn{5}{c|}{\textbf{Neural Networks}} \\
\cline{3-7}
\multicolumn{2}{|c|}{\textbf{Parameters}} & \textbf{FNN} & \textbf{LCNN} & \textbf{CCNN} & \textbf{EdgeGNN} & \textbf{F-LCNN} \\
\hline
\multirowcell{2}{$N=4$\\$K=2$} & $C_l$ & 128 & 128 & 128 & 128 & 128\\
\cline{2-7}
 & $L$ & 4 & 4 & 4 & 4 & 3+1\\
\cline{2-7}
\hline
\multirowcell{2}{$N=8$\\$K=4$} & $C_l$ & 128 & 128 & 256 & 256 & 128\\
\cline{2-7}
 & $L$ & 6 & 6 & 8 & 8 & 3+3\\
\cline{2-7}
\hline
\end{tabular}
\label{tab_hp}
\end{table}

These setups and hyper-parameters are used in the sequel unless otherwise specified.

\subsection{Comparison with Non-learning Methods}

We first compare the performance of the precoding policies learned by the EdgeGNN with the performance achieved by the beamformers with closed-form expressions (i.e., ZFBF, R-ZFBF and MRT) and WF power allocation as well as the performance achieved by numerical algorithms (i.e., the SCA and WMMSE algorithms initialized by R-ZFBF). We consider the UMa scenario in 3GPP TR 38.901\cite{3gpp38901} with both path loss and shadowing, and the users are randomly located in a macro cell of 500 meter radius.

\begin{figure}[htbp]
\centerline{\includegraphics[width=0.7\linewidth]{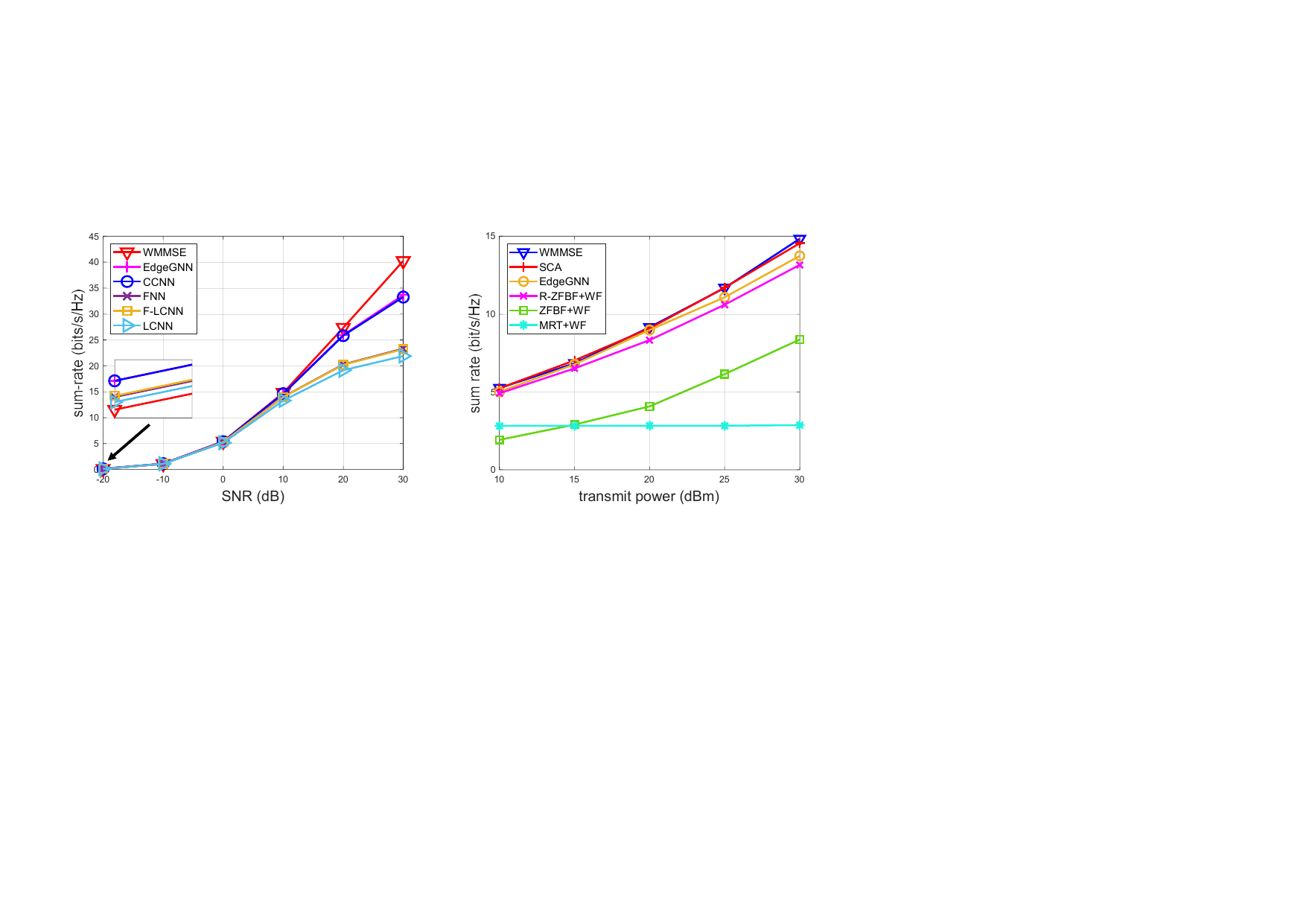}}\vspace{-3mm}
\caption{Sum rate of different methods, $N=8, K=4$.}
\label{fig_sr_snr}
\end{figure} \vspace{-0.2mm}

In Fig. \ref{fig_sr_snr}, we provide the simulation results. We can see that the EdgeGNN has a performance loss from the numerical algorithms only when the transmit power is high and outperforms R-ZFBF+WF, while both ZFBF+WF and MRT+WF exhibit a large performance gap from the EdgeGNN. The CCNN performs close to the EdgeGNN but the FNN, LCNN and F-LCNN are inferior to the EdgeGNN with high transmit power. The results are not provided for a clear figure. However, the inference time of the learning based methods is significantly lower than that of the numerical algorithms. For instance, when the transmit power is 30 dBm, the WMMSE and SCA algorithms take 103 ms and 847 ms in a regular computer, respectively, where the iterations stop when the increment of the sum-rate from a new iteration is less than $1{\times}10^{-3}$ bit/s/Hz. The inference time of the DNNs is provided in Table \ref{tab_inference_time}.

\begin{table}[htb]\vspace{-0.2mm}
\caption{Inference Time, $N=8,K=4,\mathrm{SNR}=30\ \mathrm{dBm}$}\vspace{-2mm}
\centering
\begin{tabular}{|c|c|c|c|c|}
\hline
\textbf{ FNN } & \textbf{ LCNN } & \textbf{ CCNN } & \textbf{ EdgeGNN } & \textbf{ F-LCNN } \\
\hline
0.9 ms & 1.5 ms & 3.0 ms & 3.3 ms & 1.3 ms \\
\hline
\end{tabular}
\label{tab_inference_time}
\end{table}

\subsection{Impact of Inductive Biases and System Settings}
In what follows, we quantify the impact of the inductive biases on both the learning performance and training complexity, and the impact of the key factors analyzed in section \ref{subsec_appr} on the learning performance. We use the following channel model \cite{loyka2001channel} to generate training and test samples,
\begin{equation}
\mathbf{h}_k=\sqrt{\frac{1}{1+\gamma}}\boldsymbol{\Theta}^{\frac{1}{2}}\mathbf{z}_k+\sqrt{\frac{\gamma}{1+\gamma}}\tilde{\mathbf{z}}_k,
\label{eq_rician_model}
\end{equation}
where $\mathbf{z}_k\sim\mathcal{CN}(\mathbf{0}_N,\mathbf{I}_N)$, $\tilde{\mathbf{z}}_k\in\mathbb{C}^{N{\times}1}$ is a deterministic vector, $\gamma \geq 0$ is the Rician factor, $\boldsymbol{\Theta}\in\mathbb{C}^{N{\times}N}$ is the correlation matrix with element in the $i$th row and $j$th column as $\theta_{i,j}=\rho^{j-i}$ for $i \leq j$ and $\theta_{i,j}=\theta_{j,i}$ for $i>j$, and $\rho\in[0,1]$ reflects the correlation of channels. We consider uniform linear arrays, and the $n$th element of $\tilde{\mathbf{z}}_k$ is $\mathrm{exp}(-\mathrm{i}(n-1)\pi\sin(\alpha_k))$, where $\alpha_k\sim\mathcal{U}[-\pi,\pi)$ is the angle of arrival of the direct path.

Unless otherwise specified, we set $\rho=\gamma=0$, i.e., a spatially uncorrelated Rayleigh fading channel model.

\subsubsection{Impact of Inductive Bias}
In Table \ref{tab1}, we show the tested system performance, which is measured by the sum rate achieved by each DNN normalized by the sum rate achieved by the WMMSE algorithm. When $N=4$ and $K=2$, the approximation error of all DNNs are small, and the $200,000$ training samples are sufficient for all the DNNs to achieve more than $99\%$ normalized sum rate in this setting. When $N=8$ and $K=4$, the EdgeGNN achieves the highest sum rate, and the LCNN performs the worst due to the mismatched inductive bias, as expected. Our simulation demonstrates that the performance of the LCNN can only be increased to 94.4$\%$ from 92.8$\%$ even using much more training samples to reduce the estimation error,  which indicates that its performance loss from the EdgeGNN comes from the undiminishable approximation error (not shown due to the lack of space). After cascaded with fully connected layers, the F-LCNN performs close to the FNN. When $N=8$ and $K=4$, the training and test performance of the FNN are respectively 99.4$\%$ and 94.7$\%$, while those of the  F-LCNN are respectively 99.2$\%$ and 94.6$\%$, and those of the EdgeGNN are nearly identical. The gaps between the training and test performance of FNN and F-LCNN come from the large hypothesis space incurred by the fully connected layers.

\vspace{-0.1mm}\begin{table}[htb]
\topcaption{System performance, $\mathrm{SNR}=10\ \mathrm{dB}$, $N_\mathrm{tr}=200,000$}
\centering
\begin{tabular}{|c|c|c|}
\hline
\textbf{Neural} & \multicolumn{2}{c|}{\textbf{Normalized Sum Rate}} \\
\cline{2-3}
\textbf{Networks} & $N=4$, $K=2$ & $N=8$, $K=4$ \\
\hline
\textbf{WMMSE (bits/s/Hz)} & 7.65 & 14.81 \\
\hline
\textbf{FNN} & 99.3$\%$ & 94.7$\%$ \\
\hline
\textbf{LCNN} & 99.1$\%$ & 92.8$\%$ \\
\hline
\textbf{F-LCNN} & 99.4$\%$ & 94.6$\%$ \\
\hline
\textbf{CCNN} & 99.8$\%$ & 98.8$\%$ \\
\hline
\textbf{EdgeGNN} & \textbf{99.9}$\%$ & \textbf{99.0}$\%$ \\
\hline
\end{tabular}
\label{tab1}
\end{table}

In Table \ref{tab4}, we show the training complexity in terms of sample complexity and the scale of the DNNs, which are the number of training samples and the number of free parameters (in million, using `M' for short) required by each DNN to achieve the same normalized sum rate, respectively. It is shown that the CCNN is with higher training complexity than the EdgeGNN due to the larger hypothesis space, and the LCNN needs more training samples than the CCNN due to its mismatched hypothesis space. When $N=4$, $K=2$, the LCNN needs fewer training samples than the FNN and F-LCNN. This is because the approximation error of the LCNN in this case is small, and fewer samples are required to reduce the estimation error thanks to the fewer trainable parameters.
When $N=8$ and $K=4$, the sample complexity of the LCNN is even larger than the FNN in order to compensate for the large approximation error caused by the mismatched hypothesis space.

In summary, the normalized sum rate of the CCNN is lower than the EdgeGNN with a given number of training samples, and the training complexity of the CCNN is higher than the EdgeGNN to achieve the same performance. This indicates that the learning efficiency of the CCNN is lower than the EdgeGNN, which is caused by the non-exactly matched inductive bias to the precoding policy that leads to larger estimation error. This agrees with the analyses in previous section.

\vspace{-2mm}\begin{table}[htbp]
\topcaption{Training Complexity, $\mathrm{SNR}=10\ \mathrm{dB}$}
\centering
\begin{tabular}{|c|c|c|c|c|}
\hline
\textbf{Neural} & \multicolumn{4}{c|}{\textbf{Sample Complexity \& Free Parameters}} \\
\cline{2-5}
\textbf{Networks} & \multicolumn{2}{c|}{$N=4$, $K=2$} & \multicolumn{2}{c|}{$N=8$, $K=4$} \\
\hline
\textbf{System Performance} & \multicolumn{2}{c|}{$95\%$} & \multicolumn{2}{c|}{$90\%$} \\
\hline
\textbf{FNN} & 4500 & 3M & 73,000 & 81M \\
\hline
\textbf{LCNN} & 3500 & 0.43M & 110,000 & 0.7M \\
\hline
\textbf{F-LCNN} & 4300 & 1.3M & 65,000 & 49M \\
\hline
\textbf{CCNN} & 800 & 0.43M & 20,000 & 4M \\
\hline
\textbf{EdgeGNN} & \textbf{400} & \textbf{0.14M} & \textbf{12,000} & \textbf{2M} \\
\hline
\end{tabular}
\label{tab4}
\end{table}

\subsubsection{Impact of SNR, $N$, $K$, and Channel Statistics}

In Fig. \ref{fig_snr}, we provide the tested sum rate versus SNR in the range of $-20$ dB$\sim$30 dB, where SNR is defined as $P_\mathrm{t}\mathbb{E}\left[|h|^2\right]/\sigma^2$ without considering the array gain and $h$ is the channel coefficient.\footnote{We find in two typical scenarios in 3GPP TR 38.901 (i.e., UMa and UMi with carrier frequency $f_\mathrm{c}=6$ GHz) \cite{3gpp38901} that more than 98$\%$ of receive SNRs are within -20 dB and 30 dB, where the array gain is not considered and the inter-cell interference is included in noise.} It is shown that the sum rates achieved by all the DNNs have a large gap from the WMMSE algorithm when SNR is 30 dB, but slightly exceed the algorithm when SNR is -20 dB. This can be explained as follows. In the high SNR regime, the optimal precoder is ZFBF with equal power allocation, where the matrix inverse function is hard to learn. In the low SNR regime, the optimal precoder is MRT+WF. Despite that water-filling power allocation policy cannot be well-learned due to its non-smoothness, the performance is insensitive to the power allocation.

\vspace{-0.2mm}\begin{figure}[htbp]
\centerline{\includegraphics[width=0.7\linewidth]{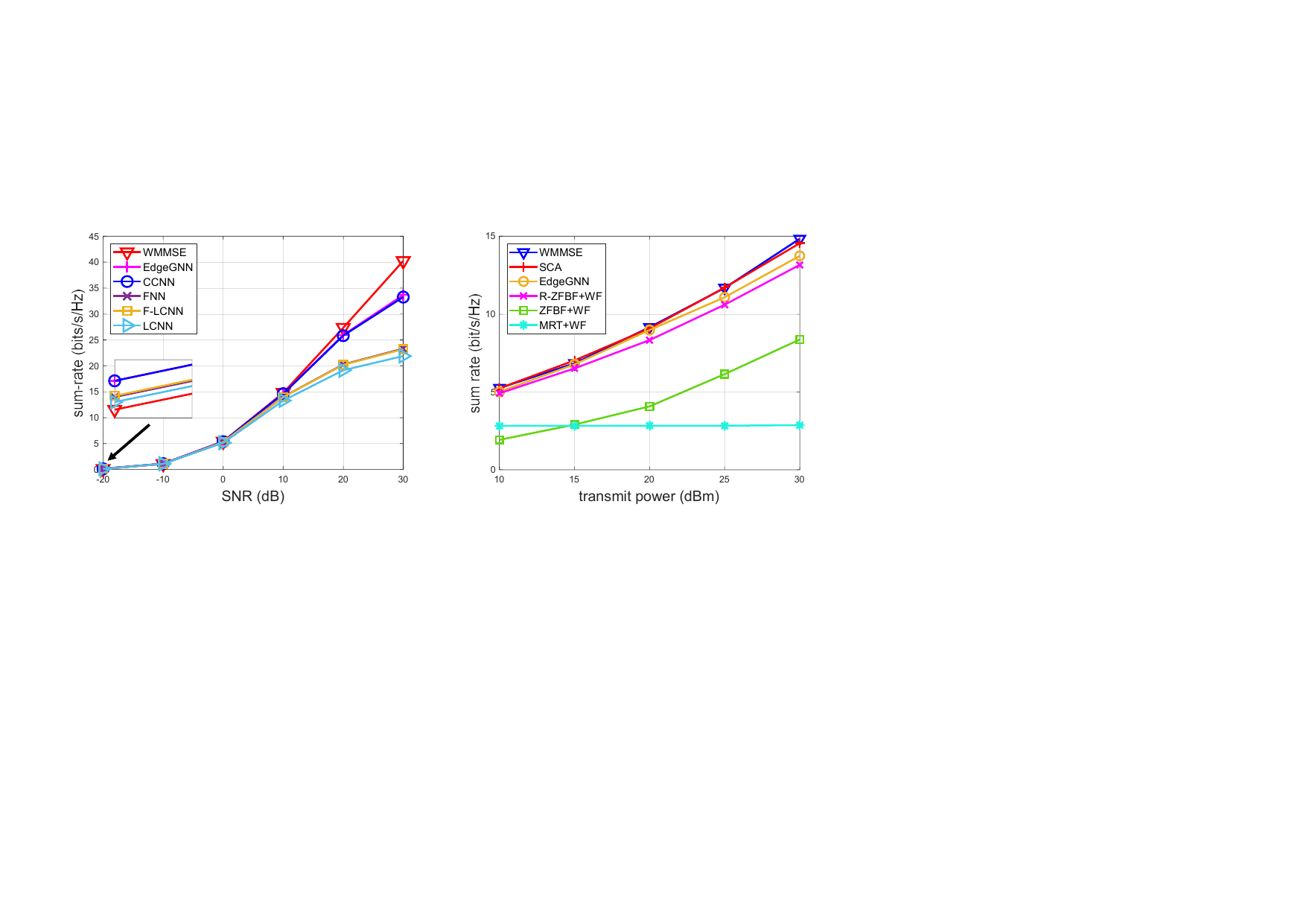}}\vspace{-2mm}
\bottomcaption{Impact of SNR on the learning performance, $N=8$, $K=4$.}
\label{fig_snr}
\end{figure}

To validate the interpretation, we first provide the normalized sum rates of different precoding policies obtained by the non-learning methods in Fig.~\ref{fig_sc_snr}(a). As expected, the WMMSE algorithm achieves similar performance to MRT+WF at low SNR and to ZFBF+WF at high SNR. In Fig.~\ref{fig_sc_snr}(b), we provide the SC of the precoding policy. Since precoding consists of beamforming and power allocation, we can analyze their impacts separately. Specifically, we provide the  SCs of the beamforming policy  and power allocation policy, respectively in Fig.~\ref{fig_sc_snr}(c) and Fig.~\ref{fig_sc_snr}(d), where the two policies are jointly optimized with the WMMSE algorithm.  We can see that the SC of the beamforming policy obtained from the WMMSE algorithm is small and close to that of MRT at low SNR, which increases and tends to that of ZFBF policy when SNR is larger. On the contrary, the SCs of different power allocation policies are large at low SNR and tend to zeros when SNR increases, since they all tend to equal power allocation when SNR is high. As a result, the SC of the precoding policy obtained from the WMMSE algorithm is large and hence the precoding policy is harder to learn for both cases of high and low SNRs, as illustrated in Fig.~\ref{fig_sc_snr}(b). However, the sum rate achieved by the policies learned by the DNNs slightly exceeds that of WMMSE algorithm at low SNR as shown in Fig. \ref{fig_snr}. This is because the considered performance (i.e., the sum rate) is insensitive to the power allocation at low SNR and thus the learning performance is still satisfactory even if the DNNs do not allocate powers in the same way as the optimal solution. These analyses can also explain why the learning efficiency is improved after harnessing the structure of the optimal precoding matrix, where only a power allocation policy requires to be learned.

\begin{figure}[htb]
\centerline{\includegraphics[width=1\linewidth]{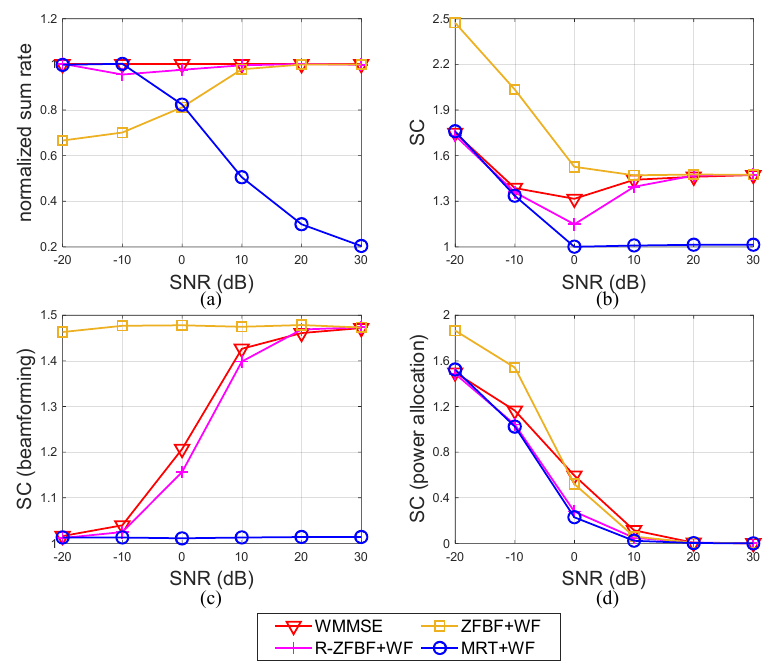}}\vspace{-1mm}
\bottomcaption{(a) Normalized sum rate of the precoding policy versus SNR. (b) Smoothness coefficient of the precoding policy versus SNR. (c) Smoothness coefficient of the beamforming policy versus SNR. (d) Smoothness coefficient of the power allocation policy versus SNR. $N=8$, $K=4$.}
\label{fig_sc_snr}
\end{figure}

In Table \ref{tab2}, we show the impact of the number of users $K$ with a given number of antennas $N$. It is shown that the normalized sum rates of all the DNNs decrease with $K$. This is because the multivariate function $\mathbf{V}^\star=f^\star(\mathbf{H})$ is less smooth and hence harder to learn with more users. To show this, we provide the SC of the precoding policy obtained by the WMMSE algorithm in Table \ref{tab_sc}, which is shown increasing with $K$ and decreasing with $N$.\footnote{It is worth mentioning that if we do not normalize the disturbance in the definition \eqref{eq_def_sc} of SC, then the SC will decrease with $K$ when $K$ is small, which can not explain the trend of the learning performance. This validates the effectiveness of the defined SC.}

\begin{table}[htb]
\topcaption{Impact of the number of users, $N=16$, $\mathrm{SNR}=10\ \mathrm{dB}$}
\centering
\begin{tabular}{|c|c|c|c|c|}
\hline
\textbf{Neural} & \multicolumn{4}{c|}{\textbf{Normalized Sum Rate}} \\
\cline{2-5}
\textbf{Networks} & $K=2$ & $K=4$ & $K=6$ & $K=8$ \\
\hline
\textbf{WMMSE (bits/s/Hz)} & 12.40 & 20.08 & 25.48 & 29.05 \\
\hline
\textbf{FNN} & 98.5$\%$ & 91.4$\%$ & 68.9$\%$ & 55.7$\%$ \\
\hline
\textbf{LCNN} & 98.8$\%$ & 93.8$\%$ & 70.3$\%$ & 56.3$\%$ \\
\hline
\textbf{F-LCNN} & 98.9$\%$ & 94.7$\%$ & 70.8$\%$ & 57.2$\%$ \\
\hline
\textbf{CCNN} & 99.8$\%$ & 99.0$\%$ & 90.0$\%$ & 88.8$\%$ \\
\hline
\textbf{EdgeGNN} & \textbf{99.9}$\%$ & \textbf{99.0}$\%$ & \textbf{94.8}$\%$ & \textbf{91.1}$\%$ \\
\hline
\end{tabular}
\label{tab2}
\end{table}

In Table \ref{tab3}, we show the impact of $N$ with given $K$. For the CCNN with kernel size $H_\mathrm{conv}=W_\mathrm{conv}=3$, its performance degrades when $N=32$ since its receptive field (RF) is $17\times17$, which cannot cover the input of size $4\times32$.\footnote{Receptive field is the size of a region in the input image of a CNN that affects the output of the CNN. The RF of a single-stride CNN is $\mathrm{RF}=(LH_\mathrm{conv}-L+1){\times}(LW_\mathrm{conv}-L+1)$ \cite{araujo2019computing}.
Since each element of the precoding matrix depends on all elements in the channel matrix, a DNN to learn the precoding policy should provide global RF.
For a single-stride CNN, the RF will be smaller than the size of input matrix when $K$ and $N$ are large if the kernel size and $L$ are small, which leads to performance degradation for learning precoding. For the EdgeGNN, the RF is global when $L \geq 2$, which can be easily proved.} The performance of the LCNN degrades when $N=16$ for the same reason. By using a larger kernel size $H_\mathrm{conv}=W_\mathrm{conv}=7$ to achieve global RF, the normalized sum rate of the CCNN increases with $N$ monotonically. The normalized sum rate of the EdgeGNN also grows with $N$ slightly. This is because when $N$ is large, the channels tend to be orthogonal and the policy is smoother as shown in Table \ref{tab_sc}. The FNN has poor performance in both small and large values of $N$ due to different reasons. When $N=4$, the training performance of the FNN is $77.0\%$ since the policy is non-smooth  (hence the approximation error is large) for the case of $N=K$. When  $N=32$, the training performance of the FNN is $92.0\%$ but the test performance is only $58.3\%$, which implies overfitting  (i.e., the estimation error is large).
The trend for the F-LCNN is similar to the FNN for the same reason but with better test performance owing to smaller hypothesis space.

\begin{table}[htb]
\topcaption{Smoothness coefficient versus $K$ and $N$,  WMMSE, $\mathrm{SNR}=10\ \mathrm{dB}$}
\centering
\begin{tabular}{|c|c|c|c|c|}
\hline
\multirowcell{2}{$N=16$} & $K=2$ & $K=4$ & $K=6$ & $K=8$ \\
\cline{2-5}
& \textbf{1.1} & 1.2 & 1.4 & 1.5 \\
\hline
\multirowcell{2}{$K=4$} & $N=4$ & $N=8$ & $N=16$ & $N=32$ \\
\cline{2-5}
& 2.2 & 1.5 & 1.2 & \textbf{1.1} \\
\hline
\end{tabular}
\label{tab_sc}
%\end{center}
\end{table}

\begin{table}[htb]
\topcaption{Impact of the number of antennas, $K=4$, $\mathrm{SNR}=10\ \mathrm{dB}$}
\centering
\begin{tabular}{|c|c|c|c|c|}
\hline
\textbf{Neural} & \multicolumn{4}{c|}{\textbf{Normalized Sum Rate}} \\
\cline{2-5}
\textbf{Networks} & $N=4$ & $N=8$ & $N=16$ & $N=32$ \\
\hline
\textbf{WMMSE (bits/s/Hz)} & 9.89 & 14.81 & 20.08 & 24.71 \\
\hline
\textbf{FNN} & 76.9$\%$ & 94.7$\%$ & 91.4$\%$ & 58.3$\%$ \\
\hline
\textbf{LCNN ($3\times3$, $L=6$)} & 75.4$\%$ & 92.8$\%$ & 90.2$\%$ & 89.4$\%$ \\
\hline
\textbf{F-LCNN} & 76.9$\%$ & 94.6$\%$ & 94.7$\%$ & 92.1$\%$ \\
\hline
\textbf{CCNN ($3\times3$, $L=8$)} & \textbf{98.6}$\%$ & 98.8$\%$ & 99.0$\%$ & 95.9$\%$ \\
\hline
\textbf{CCNN  ($7\times7$, $L=8$)} & \textbf{98.6}$\%$ & 98.9$\%$ & 99.0$\%$ & 99.1$\%$ \\
\hline
\textbf{EdgeGNN} & \textbf{98.6}$\%$ & \textbf{99.0}$\%$ & \textbf{99.2}$\%$ & \textbf{99.2}$\%$ \\
\hline
\multicolumn{5}{l}{$3\times3$ and $7\times7$ refer to the kernel sizes of LCNN and CCNN.}
\end{tabular}
\label{tab3}
\end{table}

The orthogonality of channel vectors can be improved by selecting $K$ users from $K_\mathrm{max}$ candidates. With the scheduling algorithm in \cite{yoo2006optimality}, the test  performance of the DNNs versus $K_\mathrm{max}$ is given in Table \ref{tab_kmax}. It is shown that the learning performance increases with $K_\mathrm{max}$, but only slightly.

In Table \ref{table_learning_correlation}, we show the impact of channel correlation. We consider different values of $\gamma$ and $\rho$ in \eqref{eq_rician_model}, which reflect the strength of the direct path and the correlation among the diffused paths, respectively. It is shown that the EdgeGNN and CCNN perform closely. All the DNNs perform the worst in the highly correlated channel without a direct path (i.e., $(\rho,\gamma)=(0.8,0)$). This is because the precoding policy is rather non-smooth in such channels as shown in Fig. \ref{fig_sc_rician}.

\begin{table}[htb]
\topcaption{Impact of user scheduling, $N=8$, $K=4$, $\mathrm{SNR}=10\ \mathrm{dB}$}
\centering
\begin{tabular}{|c|c|c|c|c|c|c|}
\hline
& \multicolumn{6}{c|}{\textbf{Normalized Sum Rate}} \\
\cline{2-7}
\textbf{Neural} & \multicolumn{3}{c|}{\textbf{Rayleigh Channel}} & \multicolumn{3}{c|}{\textbf{Correlated Channel,}} \\
& \multicolumn{3}{c|}{$(\rho,\gamma)=(0,0)$} & \multicolumn{3}{c|}{$(\rho,\gamma)=(0.8,0)$} \\
\cline{2-7}
\textbf{Networks} & \multicolumn{3}{c|}{$K_\mathrm{max}$} & \multicolumn{3}{c|}{$K_\mathrm{max}$} \\
\cline{2-7}
& $4$ & $8$ & $16$ & $4$ & $8$ & $16$\\
\hline
\textbf{FNN} & 94.7$\%$ & 95.0$\%$ & 95.2$\%$ & 78.1$\%$ & 79.1$\%$ & 79.7$\%$\\
\hline
\textbf{LCNN} & 92.8$\%$ & 93.8$\%$ & 94.3$\%$ & 86.7$\%$ & 89.0$\%$ & 89.6$\%$\\
\hline
\textbf{F-LCNN} & 94.6$\%$ & 94.8$\%$ & 95.0$\%$ & 87.4$\%$ & 90.0$\%$ & 90.5$\%$\\
\hline
\textbf{CCNN} & 98.8$\%$ & 99.1$\%$ & 99.1$\%$ & 97.6$\%$ & 98.2$\%$ & 98.5$\%$\\
\hline
\textbf{EdgeGNN} & \textbf{99.0}$\%$ & \textbf{99.2}$\%$ & \textbf{99.3}$\%$ & \textbf{97.8}$\%$ & \textbf{98.8}$\%$ & \textbf{99.1}$\%$\\
\hline
\end{tabular}
\label{tab_kmax}
%\end{center}
\end{table}

\begin{table}[htb]
\topcaption{Impact of channels, $N=8$, $K=4$, $\mathrm{SNR}=10\ \mathrm{dB}$}
\centering
\begin{tabular}{|c|c|c|c|c|}
\hline
\textbf{Neural} & \multicolumn{4}{c|}{\textbf{Normalized Sum Rate}} \\
\cline{2-5}
& $(\rho,\gamma)=$ & $(\rho,\gamma)=$ & $(\rho,\gamma)=$ & $(\rho,\gamma)=$\\
\textbf{Networks} & $(0,0)$ & $(0,2)$ & $(0.8,2)$ & $(0.8,0)$\\
\hline
\textbf{WMMSE (bits/s/Hz)} & 14.81 & 14.59 & 14.37 & 11.33 \\
\hline
\textbf{FNN} & 94.7$\%$ & 95.2$\%$ & 95.0$\%$ & 78.1$\%$\\
\hline
\textbf{LCNN} & 92.8$\%$ & 93.1$\%$ & 92.7$\%$ & 86.7$\%$ \\
\hline
\textbf{F-LCNN} & 94.6$\%$ & 95.1$\%$ & 94.6$\%$ & 87.4$\%$\\
\hline
\textbf{CCNN} & 98.8$\%$ & 98.3$\%$ & 98.3$\%$ & 97.6$\%$\\
\hline
\textbf{EdgeGNN} & \textbf{99.0}$\%$ & \textbf{98.8}$\%$ & \textbf{98.6}$\%$ & \textbf{97.8}$\%$\\
\hline
\end{tabular}
\label{table_learning_correlation}
%\end{center}
\end{table}

\begin{figure}[htb]
\centerline{\includegraphics[width=0.7\linewidth]{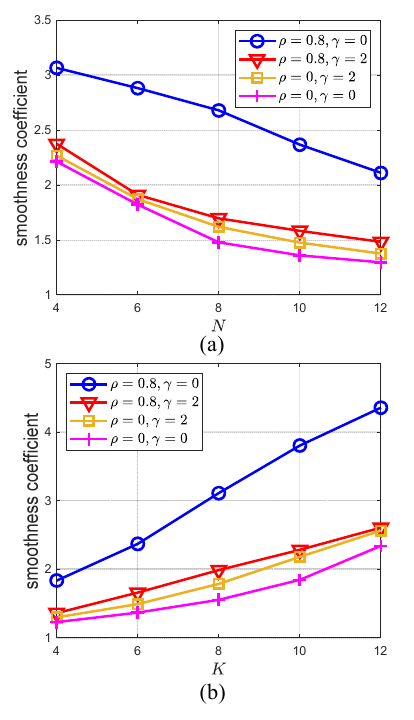}}\vspace{-2mm}
\bottomcaption{Smoothness coefficient, SNR=10 dB. (a) $K=4$. (b) $N=16$.}
\label{fig_sc_rician}
\end{figure}

\subsection{Out-of-distribution Generalization Ability}
All previous analyses and simulations are obtained when the training and testing samples are generated with the same distribution. In the sequel, we compare the generalizability of the well-trained DNNs to the numbers of users and antennas, SNR, and channel statistics, by testing their performance in the environments that are different from training.
\subsubsection{Generalizability to $K$ and $N$}
We first show the the generalizability to the problem scales, where $N_\mathrm{tr}=400,000$.

In Fig. \ref{fig_dim_gen_k}, we show the generalization ability of the EdgeGNN and CCNN to $K$. Both DNNs are trained with the samples generated under the settings of $K\in\{6,7,8\}$, while the test samples are generated with $K\in\{2,\cdots,12\}$. It can be observed that the performance of both DNNs degrade obviously when $K$ of the test samples differs significantly from that of training. The size-generalization performance of CCNN is worse, since its receptive field is fixed for a given kernel size. Furthermore, we can see that the generalizability to $K$ is better when SNR is lower (e.g., 0 dB), which is the case considered in \cite{jiang2021learning} (i.e., $-4$ dB$\sim$5 dB). This is because the precoding policy becomes MRT with WF power allocation at low SNR level. MRT is a single-user beamforming policy and hence can be generalized to multiple users, while power allocation is easier to be generalized to users as evaluated by existing works \cite{eisen2020optimal,guo2021learning}. With higher SNR, the performance of the vertex-GNN designed in \cite{jiang2021learning} also degrades significantly.

\begin{figure}[htb]
\centerline{\includegraphics[width=0.8\linewidth]{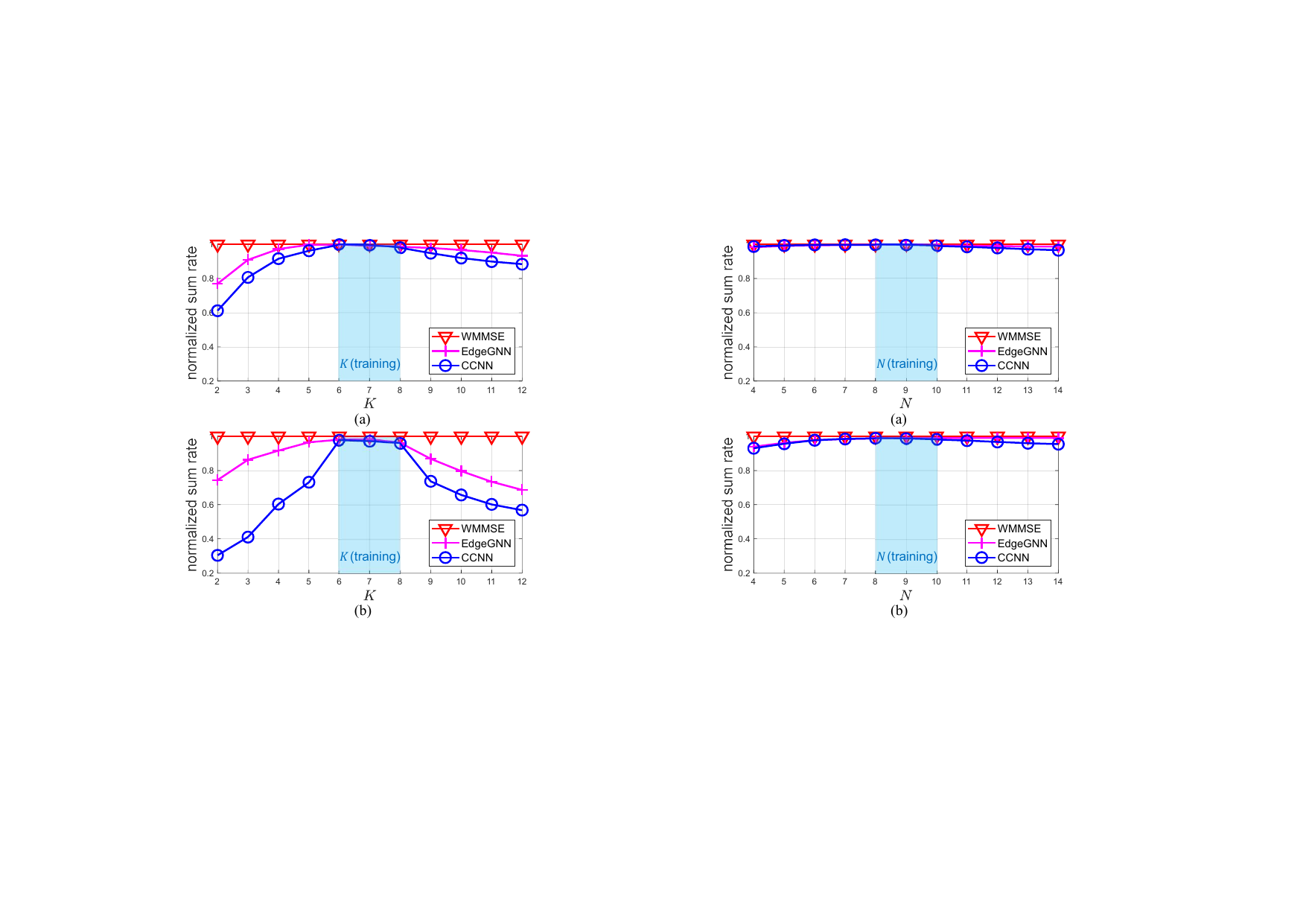}}\vspace{-4mm}
\caption{Sum rate versus $K$, $N=16$. (a) SNR = 0 dB. (b) SNR = 10 dB.}
\label{fig_dim_gen_k}
\end{figure}

In Fig. \ref{fig_dim_gen_n}, we show the generalization ability of the EdgeGNN and CCNN to $N$. Both DNNs are trained with the samples generated under the settings of $N\in\{6,7,8\}$, while the test samples are generated with $N\in\{4,\cdots,14\}$. It can be observed that both the EdgeGNN and the CCNN perform well when generalized to $N$.

\begin{figure}[htb]
\centerline{\includegraphics[width=0.8\linewidth]{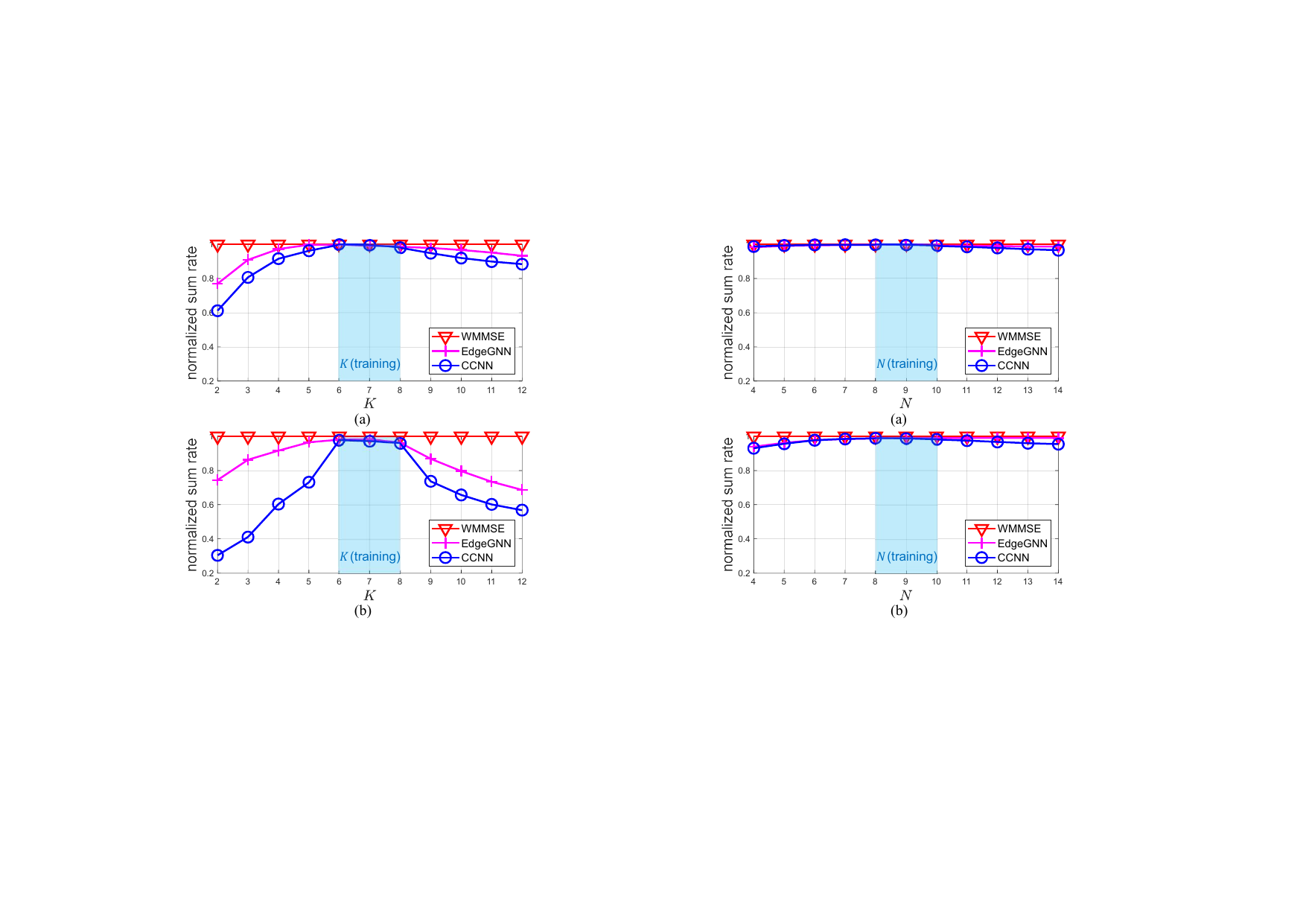}}\vspace{-4mm}
\caption{Sum rate versus $N$, $K=4$. (a) SNR = 0 dB. (b) SNR = 10 dB.}
\label{fig_dim_gen_n}
\end{figure}

\subsubsection{Generalizability to SNR and Channel Statistics}
%We next show the the generalizability to the environment parameters, where $N_\mathrm{tr}=200,000$.

In Table \ref{table_gen_snr}, we show the generalization ability of the DNNs to different SNRs. For comparison, we also train corresponding DNNs with a \textit{mixed} training set that consists of samples with various SNRs (i.e., SNR$\in\{0,5,10,15,20\}$ dB), where 40,000 training samples are generated for each SNR and a total number of $N_\mathrm{tr}=200,000$ samples are used, and SNR is also inputted to the DNNs.

\begin{table*}[htb]
\footnotesize
\caption{Generalization to SNR, $N=8$, $K=4$, $\rho=\gamma=0$.}
\centering
\begin{tabular}{|c|c|c|c|c|c|c|c|}
\hline
\multicolumn{2}{|c|}{\textbf{SNR of}} & \multicolumn{5}{c|}{\textbf{SNR of Test (dB)}} & \textbf{Average} \\
\cline{3-7}
\multicolumn{2}{|c|}{\textbf{Training (dB)}} & 0 & 5 & 10 & 15 & 20 & \textbf{Performance}\\
\hline
\multirowcell{3}{\textbf{EdgeGNN}} & 0 & \textbf{99.7}$\%$ & 95.4$\%$ & 85.4$\%$ & 73.0$\%$ & 61.0$\%$ & \textbf{82.9}$\%$\\
\cline{2-8}
 & 20 & 83.8$\%$ & 92.9$\%$ & 97.2$\%$ & \textbf{97.0}$\%$ & \textbf{94.7}$\%$ & \textbf{93.1}$\%$ \\
\cline{2-8}
 & Mixed & 98.7$\%$ & \textbf{98.6}$\%$ & \textbf{98.3}$\%$ & 96.6$\%$ & 92.8$\%$ & \textbf{97.0}$\%$ \\
\hline
\multirowcell{3}{\textbf{CCNN}} & 0 & \textbf{99.6}$\%$ & 95.4$\%$ & 85.3$\%$ & 72.6$\%$ & 60.5$\%$ & 82.7$\%$\\
\cline{2-8}
 & 20 & 83.6$\%$ & 92.8$\%$ & 97.0$\%$ & \textbf{97.0}$\%$ & \textbf{94.3}$\%$ & 92.9$\%$\\
\cline{2-8}
 & Mixed & 97.9$\%$ & \textbf{98.4}$\%$ & \textbf{98.2}$\%$ & 96.3$\%$ & 92.5$\%$ & 96.7$\%$ \\
\hline
\multirowcell{3}{\textbf{FNN}} & 0 & \textbf{97.8}$\%$ & 92.5$\%$ & 80.8$\%$ & 67.1$\%$ & 54.9$\%$ & 78.6$\%$\\
\cline{2-8}
 & 20 & 85.7$\%$ & 87.5$\%$ & 85.1$\%$ & 80.3$\%$ & 73.8$\%$ & 82.5$\%$\\
\cline{2-8}
 & Mixed & 97.2$\%$ & \textbf{96.6}$\%$ & \textbf{93.0}$\%$ & \textbf{84.9}$\%$ & \textbf{75.3}$\%$ & 89.4$\%$ \\
\hline
\multirowcell{3}{\textbf{F-LCNN}} & 0 & \textbf{98.0}$\%$ & 92.7$\%$ & 81.1$\%$ & 67.2$\%$ & 55.2$\%$ & 78.8$\%$\\
\cline{2-8}
 & 20 & 85.6$\%$ & 87.5$\%$ & 85.0$\%$ & 80.2$\%$ & 74.4$\%$ & 82.5$\%$\\
\cline{2-8}
 & Mixed & 95.2$\%$ & \textbf{95.6}$\%$ & \textbf{93.0}$\%$ & \textbf{86.5}$\%$ & \textbf{78.0}$\%$ & 89.7$\%$ \\
\hline
\end{tabular}
\label{table_gen_snr}
%\end{center}
\end{table*}

It can be observed that the DNNs trained with the samples generated at low SNR (say 0 dB) do not perform well when they are tested at high SNR (say larger than 10 dB), while the DNNs trained with high SNR perform fairly well unless they are tested at very low SNR (i.e., 0 dB). As expected, the DNNs trained with the mixed data set perform well in all the test sets, since the DNNs have ``seen'' all possible samples during training and hence do not need to be generalized. In all cases, the EdgeGNN and CCNN have similar generalization performance, both outperform the FNN and F-LCNN. This is because they have smaller hypothesis spaces.

In Table \ref{table_generalization_edgegnn}, we show the generalization ability of the EdgeGNN to channels with different statistics. The results for other DNNs are similar and hence are not shown. As a comparison, we also train the EdgeGNN with mixed samples generated with $(\rho,\gamma)\in\{(0,0),(0,2),(0.8,2),(0.8,0)\}$, where 50,000 training samples are generated for each pair of $(\rho,\gamma)$ and a total number of $200,000$ samples are used.
As expected, the EdgeGNN trained under a single value of $(\rho,\gamma)$ performs the best when it is tested on the samples with the same statistics, and the average performance of the EdgeGNN is improved by training with the mixed data set. The average performance is high when the EdgeGNN is trained under $(\rho,\gamma)=(0,0)$, which is close to the performance trained with the mixed data set, but is lowest when the EdgeGNN is trained with samples following $(\rho,\gamma)=(0.8,0)$. This is because the channels are distributed widely under the uncorrelated Rayleigh channel where $(\rho,\gamma)=(0,0)$. As a result, the training set generated under $(\rho,\gamma)=(0,0)$ can  ``cover'' the test set of other channels with high probability.

\begin{table}[htb]
\footnotesize
\caption{Generalization to Channels, $N=8$, $K=4$, $\mathrm{SNR}=10\ \mathrm{dB}$.}\vspace{-2mm}
\centering
\begin{tabular}{|c|c|c|c|c|c|}
\hline
$(\rho,\gamma)$ \textbf{of} & \multicolumn{4}{c|}{$(\rho,\gamma)$ \textbf{of Test}} & \textbf{Average} \\
\cline{2-5}
\textbf{Training} & $(0,0)$ & $(0,2)$ & $(0.8,2)$ & $(0.8,0)$ & \textbf{Performance}\\
\hline
$(0,0)$ & \textbf{99.0}$\%$ & 98.6$\%$ & 98.0$\%$ & 94.7$\%$ & \textbf{97.6}$\%$\\
\hline
$(0,2)$ & 98.9$\%$ & \textbf{98.8}$\%$ & 97.9$\%$ & 91.2$\%$ & 96.7$\%$\\
\hline
$(0.8,2)$ & 97.1$\%$ & 97.3$\%$ & \textbf{98.6}$\%$ & 94.6$\%$ & 96.9$\%$\\
\hline
$(0.8,0)$ & 87.0$\%$ & 83.2$\%$ & 89.4$\%$ & \textbf{97.8}$\%$ & 89.4$\%$\\
\hline
Mixed & 98.5$\%$ & 98.4$\%$ & 98.7$\%$ & 97.3$\%$ & \textbf{98.2}$\%$ \\
\hline
\end{tabular}
\label{table_generalization_edgegnn}
\end{table}

\vspace{3mm}\section{Conclusion}
In this paper, we analyzed when and why the DNNs
for learning precoding policy can perform well. Specifically, we analyzed the impact of inductive biases of DNNs on the performance of learning precoding by designing an edge-GNN, comparing it with two CNNs, and deriving the estimation error bounds of the EdgeGNN, CNN and FNN. We identified in which scenarios precoding policy is harder to be approximated by DNNs via analyzing the smoothness of the policy. Our analyses indicate that the learning efficiency of the EdgeGNN is higher owing to the policy-matched inductive bias, and the precoding policy is harder to learn when the SNR is high, the channels are spatially correlated, or the number of users is comparable to the number of antennas. Simulation results validated the analyses and quantified the impact of the inductive bias of a DNN and the smoothness of a policy on the learning performance and training complexity. We believe that
these analyses can provide useful insight into the design of DNNs for learning precoding policy. For example, when designing a model-driven DNN, one can introduce mathematical models such that the DNN does not need to learn a non-smooth function such as matrix inverse. Nonetheless, our analyses cannot explain the out-of-distribution generalization ability of DNNs, since the theories we used for the analyses are only applicable when the test samples are drawn from the same distribution as the training samples.

\vspace{3mm}
\bibliographystyle{IEEEtran}
\bibliography{Reference}

% Generated by IEEEtran.bst, version: 1.12 (2007/01/11)
\begin{thebibliography}{10}
\providecommand{\url}[1]{#1}
\csname url@samestyle\endcsname
\providecommand{\newblock}{\relax}
\providecommand{\bibinfo}[2]{#2}
\providecommand{\BIBentrySTDinterwordspacing}{\spaceskip=0pt\relax}
\providecommand{\BIBentryALTinterwordstretchfactor}{4}
\providecommand{\BIBentryALTinterwordspacing}{\spaceskip=\fontdimen2\font plus
\BIBentryALTinterwordstretchfactor\fontdimen3\font minus
  \fontdimen4\font\relax}
\providecommand{\BIBforeignlanguage}[2]{{%
\expandafter\ifx\csname l@#1\endcsname\relax
\typeout{** WARNING: IEEEtran.bst: No hyphenation pattern has been}%
\typeout{** loaded for the language `#1'. Using the pattern for}%
\typeout{** the default language instead.}%
\else
\language=\csname l@#1\endcsname
\fi
#2}}
\providecommand{\BIBdecl}{\relax}
\BIBdecl

\bibitem{zhao2022learning}
B.~Zhao, J.~Guo, and C.~Yang, ``Learning precoding policy: {CNN} or {GNN}?''
  \emph{IEEE WCNC}, 2022.

\bibitem{boccardi2014mimo}
F.~Boccardi, R.~Heath, A.~Lozano, T.~Marzetta \emph{et~al.}, ``Five disruptive
  technology directions for 5{G},'' \emph{IEEE Commun. Mag.}, vol.~52, no.~2,
  pp. 74--80, Feb. 2014.

\bibitem{tran2012fast}
L.~Tran, M.~Hanif, A.~Tolli, and M.~Juntti, ``Fast converging algorithm for
  weighted sum rate maximization in multicell {MISO} downlink,'' \emph{IEEE
  Signal Process. Lett.}, vol.~19, no.~12, pp. 872--875, Oct. 2012.

\bibitem{shi2011iteratively}
Q.~Shi, M.~Razaviyayn, Z.~Luo, and C.~He, ``An iteratively weighted {MMSE}
  approach to distributed sum-utility maximization for a {MIMO} interfering
  broadcast channel,'' \emph{IEEE Trans. Signal Process.}, vol.~59, no.~9, pp.
  4331--4340, Apr. 2011.

\bibitem{zhang2019deep}
C.~Zhang, P.~Patras, and H.~Haddadi, ``Deep learning in mobile and wireless
  networking: A survey,'' \emph{IEEE Commun. Surv. Tutor.}, vol.~21, no.~3, pp.
  2224--2287, Mar. 2019.

\bibitem{hu2020iterative}
Q.~Hu, Y.~Cai, Q.~Shi, K.~Xu \emph{et~al.}, ``Iterative algorithm induced
  deep-unfolding neural networks: Precoding design for multiuser {MIMO}
  systems,'' \emph{IEEE Trans. Wireless Commun.}, vol.~20, no.~2, pp.
  1394--1410, Feb. 2021.

\bibitem{kim2021learning}
J.~Kim, H.~Lee, and S.~Park, ``Learning robust beamforming for {MISO} downlink
  systems,'' \emph{IEEE Commun. Lett.}, vol.~25, no.~6, pp. 1916--1920, June
  2021.

\bibitem{elbir2021family}
A.~Elbir, K.~Mishra, B.~Shankar, and B.~Ottersten, ``A family of deep learning
  architectures for channel estimation and hybrid beamforming in multi-carrier
  mm-wave massive {MIMO},'' \emph{IEEE Trans. Cogn. Commun. Netw.}, vol.~8,
  no.~2, pp. 642--656, Dec. 2021.

\bibitem{attiah2022deep}
K.~Attiah, F.~Sohrabi, and W.~Yu, ``Deep learning for channel sensing and
  hybrid precoding in {TDD} massive {MIMO} {OFDM} systems,'' \emph{IEEE Trans.
  Wireless Commun.}, vol.~21, no.~12, pp. 10\,839--10\,853, July 2022.

\bibitem{zhang2021model}
J.~Zhang, M.~You, G.~Zheng, I.~Krikidis \emph{et~al.}, ``Model-driven learning
  for generic {MIMO} downlink beamforming with uplink channel information,''
  \emph{IEEE Trans. Wireless Commun}, vol.~21, no.~4, pp. 2368--2382, Apr.
  2022.

\bibitem{yuan2020transfer}
Y.~Yuan, G.~Zheng, K.~Wong, B.~Ottersten \emph{et~al.}, ``Transfer learning and
  meta learning-based fast downlink beamforming adaptation,'' \emph{IEEE Trans.
  Wireless Commun.}, vol.~20, no.~3, pp. 1742--1755, Mar. 2021.

\bibitem{kim2020deep}
J.~Kim, H.~Lee, S.~Hong, and S.~Park, ``Deep learning methods for universal
  {MISO} beamforming,'' \emph{IEEE Wireless Commun. Lett.}, vol.~9, no.~11, pp.
  1894--1898, Nov. 2020.

\bibitem{shi2021deep}
J.~Shi, W.~Wang, X.~Yi, X.~Gao \emph{et~al.}, ``Deep learning-based robust
  precoding for massive {MIMO},'' \emph{IEEE Trans. Commun.}, vol.~69, no.~11,
  pp. 7429--7443, Aug. 2021.

\bibitem{chen2020sub}
K.~Chen, J.~Yang, Q.~Li, and X.~Ge, ``Sub-array hybrid precoding for massive
  {MIMO} systems: A {CNN}-based approach,'' \emph{IEEE Commun. Lett.}, vol.~25,
  no.~1, pp. 191--195, Jan. 2021.

\bibitem{bo2021deep}
Z.~Bo, R.~Liu, M.~Li, and Q.~Liu, ``Deep learning based efficient symbol-level
  precoding design for {MU-MISO} systems,'' \emph{IEEE Trans. Vehicular
  Technol.}, vol.~70, no.~8, pp. 8309--8313, June 2021.

\bibitem{lei2021ci}
Z.~Lei, X.~Liao, Z.~Gao, and A.~Li, ``{CI-NN}: A model-driven deep
  learning-based constructive interference precoding scheme,'' \emph{IEEE
  Commun. Lett.}, vol.~25, no.~6, pp. 1896--1900, Feb. 2021.

\bibitem{jiang2021learning}
T.~Jiang, H.~V. Cheng, and W.~Yu, ``Learning to reflect and to beamform for
  intelligent reflecting surface with implicit channel estimation,'' \emph{IEEE
  J. Sel. Areas Commun.}, vol.~39, no.~7, pp. 1931--1945, May 2021.

\bibitem{kang2022mixed}
K.~Kang, Q.~Hu, Y.~Cai, G.~Yu \emph{et~al.}, ``Mixed-timescale deep-unfolding
  for joint channel estimation and hybrid beamforming,'' \emph{IEEE J. Sel.
  Areas Commun.}, vol.~40, no.~9, pp. 2510--2528, July 2022.

\bibitem{he2020model}
Y.~He, H.~He, C.~Wen, and S.~Jin, ``Model-driven deep learning for massive
  multiuser {MIMO} constant envelope precoding,'' \emph{IEEE Wireless Commun.
  Lett.}, vol.~9, no.~11, pp. 1835--1839, June 2020.

\bibitem{xia2020deep}
W.~Xia, G.~Zheng, Y.~Zhu, J.~Zhang \emph{et~al.}, ``A deep learning framework
  for optimization of {MISO} downlink beamforming,'' \emph{IEEE Trans.
  Commun.}, vol.~68, no.~3, pp. 1866--1880, Mar. 2020.

\bibitem{kim2022bipartite}
J.~Kim, H.~Lee, S.~Hong, and S.~Park, ``A bipartite graph neural network
  approach for scalable beamforming optimization,'' \emph{IEEE Trans. Wireless
  Commun.}, vol.~22, no.~1, pp. 333--347, July 2022.

\bibitem{battaglia2018relational}
P.~Battaglia, J.~Hamrick, V.~Bapst, A.~Gonzalez \emph{et~al.}, ``Relational
  inductive biases, deep learning, and graph networks,''
  \emph{arXiv:1806.01261}, 2018.

\bibitem{baxter2000model}
J.~Baxter, ``A model of inductive bias learning,'' \emph{J. Art. Intell. Res.},
  vol.~12, pp. 149--198, Mar. 2000.

\bibitem{lecun1989backprop}
Y.~LeCun, B.~Boser, J.~Denker, D.~Henderson \emph{et~al.}, ``Backpropagation
  applied to handwritten zip code recognition,'' \emph{Neural Comput.}, vol.~1,
  no.~4, pp. 541--551, Dec. 1989.

\bibitem{sun2022improving}
C.~Sun, J.~Wu, and C.~Yang, ``Improving learning efficiency for wireless
  resource allocation with symmetric prior,'' \emph{IEEE Wireless Commun.},
  vol.~29, no.~2, pp. 162--168, Apr. 2022.

\bibitem{liu2023multidimensional}
S.~Liu, J.~Guo, and C.~Yang, ``Multidimensional graph neural networks for
  wireless communications (early access),'' \emph{IEEE Trans. Wireless
  Commun.}, Aug 2023.

\bibitem{eisen2020optimal}
M.~Eisen and A.~Ribeiro, ``Optimal wireless resource allocation with random
  edge graph neural networks,'' \emph{IEEE Trans. Signal Process.}, vol.~68,
  pp. 2977--2991, Apr. 2020.

\bibitem{guo2021learning}
J.~Guo and C.~Yang, ``Learning power allocation for multi-cell-multi-user
  systems with heterogeneous graph neural networks,'' \emph{IEEE Trans.
  Wireless Commun.}, vol.~21, no.~2, pp. 884--897, Feb. 2022.

\bibitem{shalev2014understanding}
S.~Shalev-Shwartz and S.~Ben-David, \emph{Understanding machine learning: From
  theory to algorithms}.\hskip 1em plus 0.5em minus 0.4em\relax Cambridge
  university press, May 2014.

\bibitem{anthony1999neural}
M.~Anthony and P.~Bartlett, \emph{Neural network learning: Theoretical
  foundations}.\hskip 1em plus 0.5em minus 0.4em\relax cambridge university
  press, Nov. 1999, vol.~9.

\bibitem{kearns1994introduction}
M.~Kearns and U.~Vazirani, \emph{An introduction to computational learning
  theory}.\hskip 1em plus 0.5em minus 0.4em\relax MIT press, 1994.

\bibitem{li2018tighter}
X.~Li, J.~Lu, Z.~Wang, J.~Haupt, and T.~Zhao, ``On tighter generalization bound
  for deep neural networks: {CNNs}, resnets, and beyond,''
  \emph{arXiv:1806.05159}, 2018.

\bibitem{garg2020generalization}
V.~Garg, S.~Jegelka, and T.~Jaakkola, ``Generalization and representational
  limits of graph neural networks,'' \emph{ICML}, 2020.

\bibitem{lin2019generalization}
S.~Lin and J.~Zhang, ``Generalization bounds for convolutional neural
  networks,'' \emph{arXiv:1910.01487}, 2019.

\bibitem{arora2018stronger}
S.~Arora, R.~Ge, B.~Neyshabur, and Y.~Zhang, ``Stronger generalization bounds
  for deep nets via a compression approach,'' \emph{ICML}, 2018.

\bibitem{bartlett2019nearly}
P.~L. Bartlett, N.~Harvey, C.~Liaw, and A.~Mehrabian, ``Nearly-tight
  {VC}-dimension and pseudodimension bounds for piecewise linear neural
  networks,'' \emph{J. Mach. Learn. Res.}, vol.~20, no.~1, pp. 2285--2301, Jan.
  2019.

\bibitem{bousquet2003new}
O.~Bousquet, ``New approaches to statistical learning theory,'' \emph{Ann.
  Inst. Statist. Math.}, vol.~55, pp. 371--389, June 2003.

\bibitem{devore2021neural}
R.~DeVore, B.~Hanin, and G.~Petrova, ``Neural network approximation,''
  \emph{Acta Numer.}, vol.~30, pp. 327--444, May 2021.

\bibitem{yarotsky2017error}
D.~Yarotsky, ``Error bounds for approximations with deep {ReLU} networks,''
  \emph{Neural Netw.}, vol.~94, pp. 103--114, Oct. 2017.

\bibitem{shen2020deep}
Z.~Shen, ``Deep network approximation characterized by number of neurons,''
  \emph{Commun. Comput. Phys.}, vol.~28, no.~5, pp. 1768--1811, June 2020.

\bibitem{hu2020deep}
Q.~Hu, F.~Gao, H.~Zhang, S.~Jin \emph{et~al.}, ``Deep learning for channel
  estimation: Interpretation, performance, and comparison,'' \emph{IEEE Trans.
  Wireless Commun.}, vol.~20, no.~4, pp. 2398--2412, Dec. 2020.

\bibitem{shrestha2023optimal}
S.~Shrestha, X.~Fu, and M.~Hong, ``Optimal solutions for joint beamforming and
  antenna selection: From branch and bound to graph neural imitation
  learning,'' \emph{IEEE Trans. Signal Process.}, vol.~71, pp. 831--846, Feb.
  2023.

\bibitem{shen2022graph}
Y.~Shen, J.~Zhang, S.~Song, and K.~Letaief, ``Graph neural networks for
  wireless communications: From theory to practice,'' \emph{IEEE Trans.
  Wireless Commun.}, vol.~22, no.~5, pp. 3554--3569, Nov. 2022.

\bibitem{schubert2019circular}
S.~Schubert, P.~Neubert, J.~P{\"o}schmann, and P.~Protzel, ``Circular
  convolutional neural networks for panoramic images and laser data,''
  \emph{IEEE IV}, 2019.

\bibitem{jo2021edge}
J.~Jo, J.~Baek, S.~Lee, D.~Kim \emph{et~al.}, ``Edge representation learning
  with hypergraphs,'' \emph{Adv. Neural Inf. Process. Syst.}, vol.~34, pp.
  7534--7546, Dec. 2021.

\bibitem{bjornson2014optimal}
E.~Bj{\"o}rnson, M.~Bengtsson, and B.~Ottersten, ``Optimal multiuser transmit
  beamforming: A difficult problem with a simple solution structure,''
  \emph{IEEE Signal Process. Mag.}, vol.~31, no.~4, pp. 142--148, July 2014.

\bibitem{trefethen1997numerical}
L.~Trefethen and D.~Bau, \emph{Numerical linear algebra}.\hskip 1em plus 0.5em
  minus 0.4em\relax Siam, June 1997, vol.~50.

\bibitem{peel2005vector}
C.~Peel, B.~Hochwald, and L.~Swindlehurst, ``A vector-perturbation technique
  for near-capacity multiantenna multiuser communication-part {I}: channel
  inversion and regularization,'' \emph{IEEE Trans. Commun.}, vol.~53, no.~1,
  pp. 195--202, Jan. 2005.

\bibitem{yoo2006optimality}
T.~Yoo and A.~Goldsmith, ``On the optimality of multiantenna broadcast
  scheduling using zero-forcing beamforming,'' \emph{IEEE J. Sel. Areas
  Commun.}, vol.~24, no.~3, pp. 528--541, Mar. 2006.

\bibitem{marzetta2010noncooperative}
T.~Marzetta, ``Noncooperative cellular wireless with unlimited numbers of base
  station antennas,'' \emph{IEEE Trans. Wireless Commun.}, vol.~9, no.~11, pp.
  3590--3600, Nov. 2010.

\bibitem{zhang2019making}
R.~Zhang, ``Making convolutional networks shift-invariant again,'' \emph{ICML},
  2019.

\bibitem{hornik1989multilayer}
K.~Hornik, M.~Stinchcombe, and H.~White, ``Multilayer feedforward networks are
  universal approximators,'' \emph{Neural Netw.}, vol.~2, no.~5, pp. 359--366,
  Jan. 1989.

\bibitem{ICC2009}
S.~Christensen, R.~Agarwal, E.~Carvalho, and J.~Cioffi, ``Weighted sum-rate
  maximization using weighted {MMSE} for {MIMO}-{BC} beamforming design,''
  \emph{IEEE ICC}, 2009.

\bibitem{3gpp38901}
\BIBentryALTinterwordspacing
3GPP, ``{Study on channel model for frequencies from 0.5 to 100 {GHz}},'' {3rd
  Generation Partnership Project (3GPP)}, Technical Report (TR) 38.901, 03
  2022, version 17.0.0. [Online]. Available:
  \url{https://portal.3gpp.org/desktopmodules/Specifications/SpecificationDetails.aspx?specificationId=3173}
\BIBentrySTDinterwordspacing

\bibitem{loyka2001channel}
S.~Loyka, ``Channel capacity of {MIMO} architecture using the exponential
  correlation matrix,'' \emph{IEEE Commun. Lett.}, vol.~5, no.~9, pp. 369--371,
  Sep. 2001.

\bibitem{araujo2019computing}
A.~Araujo, W.~Norris, and J.~Sim, ``Computing receptive fields of convolutional
  neural networks,'' \emph{Distill}, vol.~4, no.~11, p.~21, Nov. 2019.

\bibitem{hartford2018deep}
J.~Hartford, D.~Graham, K.~Brown, and S.~Ravanbakhsh, ``Deep models of
  interactions across sets,'' \emph{ICML}, 2018.

\bibitem{bartlett2017spectrally}
P.~Bartlett, D.~Foster, and M.~Telgarsky, ``Spectrally-normalized margin bounds
  for neural networks,'' \emph{NeurIPS}, 2017.

\end{thebibliography}

\appendices
\section{}\label{proof_prop_edge_gnn}

\setcounter{equation}{0}
\renewcommand\theequation{A.\arabic{equation}}

Since $\phi(\cdot)$ is an element-wise function that does not affect the permutation property, in this proof we remove $\phi(\cdot)$ for notational simplicity. Then, \eqref{eq_forward} can be rewritten as $\mathbf{D}^{(l)}=\overline{\mathbf{P}}_\mathrm{G}\odot\mathbf{D}^{(l-1)}\mathbf{1}_N\mathbf{1}^\mathsf{T}_N+\overline{\mathbf{O}}_\mathrm{G}^\prime\odot\mathbf{D}^{(l-1)}+\mathbf{1}_K\mathbf{1}^\mathsf{T}_K\overline{\mathbf{Q}}_\mathrm{G}\odot\mathbf{D}^{(l-1)}$, where $\overline{\mathbf{A}}\triangleq\left[\begin{array}{ccc}
\mathbf{A} & \cdots & \mathbf{A}\\
\vdots & \ddots & \vdots \\
\mathbf{A} & \cdots & \mathbf{A}
\end{array}\right]$, $\mathbf{A}\in\{\mathbf{O}_\mathrm{G}^\prime,\mathbf{P}_\mathrm{G},\mathbf{Q}_\mathrm{G}\}$.

Considering the structure of $\overline{\mathbf{A}}$, we have $\overline{\mathbf{A}}\odot(\boldsymbol{\Pi}_1^\mathsf{T}\mathbf{D}^{(l-1)}\boldsymbol{\Pi}_2)=\boldsymbol{\Pi}_1^\mathsf{T}(\overline{\mathbf{A}}\odot\mathbf{D}^{(l-1)})\boldsymbol{\Pi}_2$. Then, we can obtain $\mathbf{W}_\mathrm{G}(\boldsymbol{\Pi}_1^\mathsf{T}\mathbf{D}^{(l-1)}\boldsymbol{\Pi}_2)
=\overline{\mathbf{O}}_\mathrm{G}^\prime\odot(\boldsymbol{\Pi}_1^\mathsf{T}\mathbf{D}^{(l-1)}\boldsymbol{\Pi}_2)+\mathbf{1}_K\mathbf{1}_K^\mathsf{T}(\overline{\mathbf{Q}}_\mathrm{G}\odot(\boldsymbol{\Pi}_1^\mathsf{T}\mathbf{D}^{(l-1)}\boldsymbol{\Pi}_2))
+(\overline{\mathbf{P}}_\mathrm{G}\odot(\boldsymbol{\Pi}_1^\mathsf{T}\mathbf{D}^{(l-1)}\boldsymbol{\Pi}_2))\mathbf{1}_N\mathbf{1}^\mathsf{T}_N
=\boldsymbol{\Pi}_1^\mathsf{T}(\overline{\mathbf{O}}_\mathrm{G}^\prime\odot\mathbf{D}^{(l-1)})\boldsymbol{\Pi}_2+\mathbf{1}_K(\mathbf{1}_K^\mathsf{T}\boldsymbol{\Pi}_1^\mathsf{T})(\overline{\mathbf{Q}}_\mathrm{G}\odot\mathbf{D}^{(l-1)})\boldsymbol{\Pi}_2
+\boldsymbol{\Pi}_1^\mathsf{T}(\overline{\mathbf{P}}_\mathrm{G}\odot\mathbf{D}^{(l-1)})(\boldsymbol{\Pi}_2\mathbf{1}_N)\mathbf{1}^\mathsf{T}_N
\overset{\mathrm{(a)}}{=}\boldsymbol{\Pi}_1^\mathsf{T}(\overline{\mathbf{O}}_\mathrm{G}^\prime\odot\mathbf{D}^{(l-1)})\boldsymbol{\Pi}_2+(\boldsymbol{\Pi}_1^\mathsf{T}\mathbf{1}_K)\mathbf{1}_K^\mathsf{T}(\overline{\mathbf{Q}}_\mathrm{G}\odot\mathbf{D}^{(l-1)})\boldsymbol{\Pi}_2
+\boldsymbol{\Pi}_1^\mathsf{T}(\overline{\mathbf{P}}_\mathrm{G}\odot\mathbf{D}^{(l-1)})\mathbf{1}_N(\mathbf{1}^\mathsf{T}_N\boldsymbol{\Pi}_2)
=\boldsymbol{\Pi}_1^\mathsf{T}\mathbf{D}^{(l)}\boldsymbol{\Pi}_2$, where $(a)$ holds because permuting rows of $\mathbf{1}$ or columns of $\mathbf{1}^\mathsf{T}$ does not change the result.
By staking $L$ layers, we can obtained the result in the proposition according to \cite{hartford2018deep}.

\section{}\label{app_b}
We derive the estimation error bound of the EdgeGNN, while the derivations for FNN and CCNN are similar and thus not provided due to the page limitation.

According to Lemma 3.1 in \cite{bartlett2017spectrally}, the estimation error bound of a DNN can be obtained from its Rademacher complexity, which can be derived from the covering number of the weight matrix of the DNN. In what follows, we first derive the covering number of the weight matrix in each layer.

Before introducing the notion of covering number, we first define $\epsilon$-cover of a matrix set. For a set of matrices $\mathcal{W}$, its $\epsilon$-cover is another set of matrices  (denoted as $\mathrm{cover}(\mathcal{W},\epsilon)$), where for any $\mathbf{W}_1\in\mathcal{W}$, one can find $\mathbf{W}_2\in\mathrm{cover}(\mathcal{W},\epsilon)$ such that $||\mathbf{W}_1-\mathbf{W}_2||{\leq}\epsilon$, as illustrated in Fig.~\ref{fig_epsilon_cover}. For any given $\mathcal{W}$ and $\epsilon$, there may exist more than one $\epsilon$-cover of $\mathcal{W}$. The covering number (denoted as $\mathcal{N}(\mathcal{W},\epsilon)$) is the minimal number of matrices in all the $\epsilon$-covers of $\mathcal{W}$.

\begin{figure}[htb]
\centerline{\includegraphics[width=0.4\linewidth]{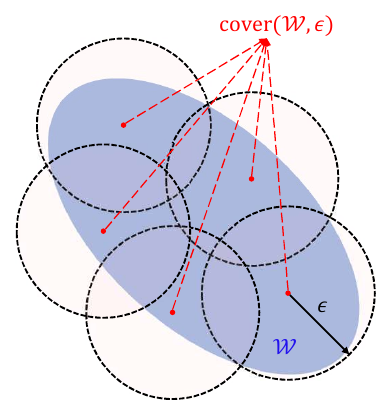}}
\bottomcaption{An $\epsilon$-cover of $\mathcal{W}$, where $\mathcal{W}$ (the blue ellipse) consists of infinite number of matrices. There are five matrices (red dots) in $\mathrm{cover}(\mathcal{W},\epsilon)$. Any matrix in $\mathcal{W}$ is within the distance $\epsilon$ from at least one matrix in $\mathrm{cover}(\mathcal{W},\epsilon)$.}
\label{fig_epsilon_cover}
\end{figure}

Denote the set of block submatrices in the $l$th layer of the EdgeGNN as $\mathcal{W}_\mathrm{S}\triangleq\{[\mathbf{O}_\mathrm{G}^{(l)\prime},\mathbf{Q}_\mathrm{G}^{(l)},
\mathbf{P}_\mathrm{G}^{(l)}]:||\mathbf{O}_\mathrm{G}^{(l)\prime}||,||\mathbf{Q}_\mathrm{G}^{(l)}||,
||\mathbf{P}_\mathrm{G}^{(l)}||{\leq}a_l/\sqrt{KN(K+N-1)}\}$, where each submatrix satisfies the F-norm bounded condition.

Denote the corresponding set of matrices  in the $l$th layer of the EdgeGNN as $\mathcal{W}_\mathrm{H}\triangleq\{\mathbf{W}_\mathrm{G}^{(l)}\mathrm{vec}(\mathbf{D}^{(l-1)}):||\mathbf{O}_\mathrm{G}^{(l)\prime}||,||\mathbf{Q}_\mathrm{G}^{(l)}||,||\mathbf{P}_\mathrm{G}^{(l)}||{\leq}a_l/\sqrt{KN(K+N-1)}\}$, which becomes the hidden representations of the $l$th layer after adding the activation function, where ${\mathbf{W}}_\mathrm{G}^{(l)}$ is the weight matrix composed of submatrices ${\mathbf{O}}_\mathrm{G}^{(l)\prime},{\mathbf{Q}}_\mathrm{G}^{(l)},{\mathbf{P}}_\mathrm{G}^{(l)}$ as in \eqref{eq_structured_parameter}.

%In the following, we find the relation between the covering numbers of $\mathcal{W}_\mathrm{S}$ and $\mathcal{W}_\mathrm{H}$.

For any $[\mathbf{O}_\mathrm{G}^{(l)\prime},\mathbf{Q}_\mathrm{G}^{(l)},\mathbf{P}_\mathrm{G}^{(l)}]\in\mathcal{W}_\mathrm{S}$, we can always find a matrix ${[\widehat{\mathbf{O}}_\mathrm{G}^{(l)\prime},\widehat{\mathbf{Q}}_\mathrm{G}^{(l)},\widehat{\mathbf{P}}_\mathrm{G}^{(l)}]}\in\mathrm{cover}(\mathcal{W}_\mathrm{S},\epsilon)$ such that $||[\widehat{\mathbf{O}}_\mathrm{G}^{(l)\prime},\widehat{\mathbf{Q}}_\mathrm{G}^{(l)},\widehat{\mathbf{P}}_\mathrm{G}^{(l)}]-[\mathbf{O}_\mathrm{G}^{(l)\prime},\mathbf{Q}_\mathrm{G}^{(l)},\mathbf{P}_\mathrm{G}^{(l)}]||{\leq}\epsilon$. Denote $\widehat{\mathbf{W}}_\mathrm{G}^{(l)}$ as the weight matrix composed of submatrices $\widehat{\mathbf{O}}_\mathrm{G}^{(l)\prime},\widehat{\mathbf{Q}}_\mathrm{G}^{(l)},\widehat{\mathbf{P}}_\mathrm{G}^{(l)}$ as in \eqref{eq_structured_parameter}. Then, according to the inequality $||\mathbf{Ab}||{\leq}||\mathbf{A}||||\mathbf{b}||$
 \cite{trefethen1997numerical}, we have $||\mathbf{W}_\mathrm{G}^{(l)}\mathrm{vec}(\mathbf{D}^{(l-1)})-\widehat{\mathbf{W}}_\mathrm{G}^{(l)}\mathrm{vec}(\mathbf{D}^{(l-1)})||{\leq}||\mathbf{W}_\mathrm{G}^{(l)}-\widehat{\mathbf{W}}_\mathrm{G}^{(l)}||||\mathrm{vec}(\mathbf{D}^{(l-1)})||=(KN||\mathbf{O}_\mathrm{G}^{(l)\prime}-\widehat{\mathbf{O}}_\mathrm{G}^{(l)\prime}||^2+KN(K-1)||\mathbf{P}_\mathrm{G}^{(l)}-\widehat{\mathbf{P}}_\mathrm{G}^{(l)}||^2+KN(N-1)||\mathbf{Q}_\mathrm{G}^{(l)}-\widehat{\mathbf{Q}}_\mathrm{G}^{(l)}||^2)^{\frac{1}{2}}||\mathrm{vec}(\mathbf{D}^{(l-1)})||{\leq}\sqrt{KN(K+N-1)}||\mathrm{vec}(\mathbf{D}^{(l-1)})||\epsilon$. Hence, $\{\widehat{\mathbf{W}}_\mathrm{G}^{(l)}\mathrm{vec}(\mathbf{D}^{(l-1)}):{[\widehat{\mathbf{O}}_\mathrm{G}^{(l)\prime},\widehat{\mathbf{Q}}_\mathrm{G}^{(l)},\widehat{\mathbf{P}}_\mathrm{G}^{(l)}]}\in\mathrm{cover}(\mathcal{W}_\mathrm{S},\epsilon)\}$ is an $\sqrt{KN(K+N-1)}||\mathrm{vec}(\mathbf{D}^{(l-1)})||\epsilon$-cover of $\mathcal{W}_\mathrm{H}$. This indicates that any $\mathrm{cover}(\mathcal{W}_\mathrm{S},\epsilon)$ can be mapped into $\mathrm{cover}(\mathcal{W}_\mathrm{H},\sqrt{KN(K+N-1)}||\mathrm{vec}(\mathbf{D}^{(l-1)})||\epsilon)$. Therefore, we can always find a $\mathrm{cover}(\mathcal{W}_\mathrm{H},\sqrt{KN(K+N-1)}||\mathrm{vec}(\mathbf{D}^{(l-1)})||\epsilon)$ with the number of matrices equal to $\mathcal{N}(\mathcal{W}_\mathrm{S},\epsilon)$, which indicates that $\mathcal{N}\left(\mathcal{W}_\mathrm{H},\sqrt{KN(K+N-1)}||\mathrm{vec}(\mathbf{D}^{(l-1)})||\epsilon\right){\leq}\mathcal{N}\left(\mathcal{W}_\mathrm{S},\epsilon\right)$.

 Since if  $||\mathbf{O}_\mathrm{G}^{(l)\prime}||,||\mathbf{Q}_\mathrm{G}^{(l)}||,||\mathbf{P}_\mathrm{G}^{(l)}||{\leq}a_l/\sqrt{KN(K+N-1)}$, then $||[\mathbf{O}_\mathrm{G}^{(l)\prime},\mathbf{Q}_\mathrm{G}^{(l)},\mathbf{P}_\mathrm{G}^{(l)}]||{\leq}a_l\sqrt{3/(KN(K+N-1))}$, $\mathcal{W}_\mathrm{S}$ is a subset of $\mathcal{W}_\mathrm{S}^\prime\triangleq\{[\mathbf{O}_\mathrm{G}^{(l)\prime},\mathbf{Q}_\mathrm{G}^{(l)},\mathbf{P}_\mathrm{G}^{(l)}]:||[\mathbf{O}_\mathrm{G}^{(l)\prime},\mathbf{Q}_\mathrm{G}^{(l)},\mathbf{P}_\mathrm{G}^{(l)}]||{\leq}\sqrt{3/(KN(K+N-1))}a_l\}$. Further considering the relation between the covering numbers of $\mathcal{W}_\mathrm{S}$ and $\mathcal{W}_\mathrm{H}$, we have
\begin{IEEEeqnarray*}{l}
\ln\mathcal{N}\left(\mathcal{W}_\mathrm{H},\epsilon\right)\\
\quad{\leq}\ln\mathcal{N}\left(\mathcal{W}_\mathrm{S},\frac{\epsilon}{\sqrt{KN(K+N-1)}||\mathrm{vec}(\mathbf{D}^{(l-1)})||}\right)\\
\quad\overset{\mathrm{(a)}}{\leq} \ln\mathcal{N}\left(\mathcal{W}_\mathrm{S}^\prime,\frac{\epsilon}{\sqrt{KN(K+N-1)}||\mathrm{vec}(\mathbf{D}^{(l-1)})||}\right)\\
\quad\overset{\mathrm{(b)}}{\leq}3C_lC_{l-1}\ln\left(1+\frac{2\sqrt{3}a_l||\mathrm{vec}(\mathbf{D}^{(l-1)})||}{\epsilon}\right),\IEEEyesnumber
\end{IEEEeqnarray*}
\noindent where (a) holds since $\mathcal{W}_\mathrm{S}$ is a subset of $\mathcal{W}_\mathrm{S}^\prime$, and (b) holds according to Lemma 10 in \cite{lin2019generalization}.

Using the similar technique as in \cite{lin2019generalization}, the covering number of all the output matrices (i.e., $\mathbf{D}^{(L)}$) obtained by inputting $\widetilde{\mathbf{H}}$ into an $L$-layer EdgeGNN can be derived.
When the first two conditions in the proposition are satisfied, the bound of the empirical Rademacher complexity of the set of functions represented by the EdgeGNN (denotes as $\mathcal{F}_\mathrm{G}$) can be derived by using similar derivations in Lemma 18 in \cite{lin2019generalization} as
\begin{equation}\label{eq_rademacher}
\mathfrak{R}_{\widetilde{\mathbf{H}}}(\mathcal{F}_\mathrm{G}){\leq}16M^{-\frac{5}{8}}\left(\frac{4L^2||\widetilde{\mathbf{H}}||\left(\prod_{l=1}^LL_\phi^{(l)}s_lR_\mathrm{G}\right)}{L_\mathrm{loss}}\right)^{\frac{1}{4}},
\end{equation}
\noindent where $R_\mathrm{G}=\sum_{l=1}^L\frac{9\sqrt{3}C_{l-1}^2C_l^2a_l}{s_l}$.

Denote the average utility of a precoding policy $f(\cdot)$ on the training set $\widetilde{\mathbf{H}}$ as $\widehat{\mathcal{R}}_{\widetilde{\mathbf{H}}}(f)=\frac{1}{M}\sum_{m=1}^MU(\mathbf{H}_m,f(\mathbf{H}_m))$. Then, the estimation error can be bounded as
\begin{IEEEeqnarray*}{rcl}\label{eq_UB}
\mathcal{R}(\hat{f}^\star)-\mathcal{R}(\hat{f})&=&
\mathcal{R}(\hat{f}^\star)-\mathcal{R}(\hat{f})+
\widehat{\mathcal{R}}_{\widetilde{\mathbf{H}}}(\hat{f}^\star)-
\widehat{R}_{\widetilde{\mathbf{H}}}(\hat{f}^\star)\\
&{\leq}&\mathcal{R}(\hat{f}^\star)-\mathcal{R}(\hat{f})+
\widehat{\mathcal{R}}_{\widetilde{\mathbf{H}}}(\hat{f})-
\widehat{R}_{\widetilde{\mathbf{H}}}(\hat{f}^\star)\\
&{\leq}&|\mathcal{R}(\hat{f}^\star)
-\widehat{R}_{\widetilde{\mathbf{H}}}(\hat{f}^\star)|+
|\widehat{\mathcal{R}}_{\widetilde{\mathbf{H}}}(\hat{f}^\star)-
\mathcal{R}(\hat{f})|\\
&{\leq}&2\sup_{f\in\mathcal{F}}|\mathcal{R}(f)-
\widehat{\mathcal{R}}_{\widetilde{\mathbf{H}}}(f)|.\IEEEyesnumber
\end{IEEEeqnarray*}
According to Lemma 3.1 in \cite{bartlett2017spectrally},
$\sup_{f\in\mathcal{F}}|\mathcal{R}(f)-\widehat{\mathcal{R}}_{\widetilde{\mathbf{H}}}(f)|{\leq}2\mathfrak{R}_{\widetilde{\mathbf{H}}}(\mathcal{F}_\mathrm{G})+3\sqrt{\ln(1/\delta)/2M}$ with probability $1-\delta$. 

Upon substituting \eqref{eq_rademacher}, we complete the proof.

\vfill

\end{document}